\documentclass[runningheads]{llncs}
\usepackage{tikz}
\usetikzlibrary{positioning,chains,fit,shapes,calc}
\usepackage{booktabs} 
\usepackage[ruled]{algorithm2e} 

\SetAlFnt{\small}
\SetAlCapFnt{\small}
\SetAlCapNameFnt{\small}
\SetAlCapHSkip{0pt}
\IncMargin{-\parindent}


\usepackage{macros_pan}

\let\llncssubparagraph\subparagraph
\let\subparagraph\paragraph
\usepackage[compact]{titlesec}
\let\subparagraph\llncssubparagraph

\newcommand{\bw}{\mathbf{w}}

\newcommand{\off}{\ensuremath{\operatorname{OFF-CCM}}\xspace}
\newcommand{\on}{\ensuremath{\operatorname{ON-CCM}}\xspace}
\newcommand{\ons}{\ensuremath{\operatorname{ON-CSM}}\xspace}

\newcommand{\algoff}{\ensuremath{\mathsf{ALG1}}\xspace}
\newcommand{\algon}{\ensuremath{\mathsf{ALG2}}\xspace}

\newcommand{\nna}{\ensuremath{\mathsf{ALG3}}\xspace}
\newcommand{\nnab}{\ensuremath{\mathsf{ALG3B}}\xspace}

\newcommand{\sG}{\ensuremath{\mathsf{G}}\xspace}

\newcommand{\proa}{\textbf{P1}\xspace}
\newcommand{\prob}{\textbf{P2}\xspace}
\newcommand{\proc}{\textbf{P3}\xspace}

\usepackage{url}            
\usepackage{amsfonts}       
\usepackage{nicefrac}       
\usepackage{microtype}      
\usepackage{xcolor}         
\usepackage{subcaption}
\usepackage{multirow}
\usepackage{algorithmic}

\usepackage[sectionbib,numbers,compress]{natbib}
\setlength{\bibsep}{0.0pt}

\newcommand{\bsw}{\ensuremath{\operatorname{ON-CSM}}\xspace}
\newcommand{\bbsp}{\ensuremath{\operatorname{BBM-2}}\xspace}

\begin{document}

\title{Two-Sided Capacitated Submodular Maximization in Gig Platforms\thanks{A preliminary version of this paper was accepted to the 19th {Conference on Web and Internet Economics} (WINE), 2023. Pan Xu was partially supported by NSF CRII Award IIS-1948157. The author would like to thank the anonymous reviewers for their valuable comments.}}

\author{Pan Xu}
\institute{Department of Computer Science, New Jersey Institute of Technology
\\
\email{pxu@njit.edu}}

\maketitle

\begin{abstract}
In this paper, we propose three generic models of capacitated coverage and, more generally, submodular maximization to study task-worker assignment problems that arise in a wide range of gig economy platforms. Our models incorporate the following features: (1) Each task and worker can have an arbitrary matching capacity, which captures the limited number of copies or finite budget for the task and the working capacity of the worker; (2) Each task is associated with a coverage or, more generally, a monotone submodular utility function. Our objective is to design an allocation policy that maximizes the sum of all tasks' utilities, subject to capacity constraints on tasks and workers. We consider two settings: offline, where all tasks and workers are static, and online, where tasks are static while workers arrive dynamically. We present three LP-based rounding algorithms that achieve optimal approximation ratios of $1-1/\mathsf{e} \sim 0.632$ for offline coverage maximization, competitive ratios of $(19-67/\mathsf{e}^3)/27\sim 0.580$ and $0.436$ for online coverage and online monotone submodular maximization, respectively. 
\end{abstract} 

\section{Introduction}
The gig economy has flourished due to the popularity of smartphones over the last few decades. Examples range from ride-hailing services such as Uber and Lyft to food delivery platforms like Postmates and Instacart, and freelancing marketplaces such as Upwork. These platforms have leveraged the widespread use of smartphones to efficiently connect service providers and customers, creating a significant number of convenient and flexible job opportunities. At its core, the gig economy involves a system with service requesters (tasks) and service providers (workers), where completing each task contributes revenue to the system. Therefore, a fundamental issue in revenue management is designing a matching policy between tasks and workers that maximizes the total revenue from all completed tasks. A common strategy is assigning a weight to each task, representing the profit obtained from completing it, and formulating the objective as the maximization of a linear function that represents the total revenue over all completed tasks~\cite{ho2012online,assadi2015online,xu2017budgeted}. However, this paper proposes a different approach by associating each task with a specific monotone submodular function over the set of workers, thus updating the objective to the maximization of the sum of utility functions over all tasks. Let's consider the following motivating examples.

\noindent \textbf{Assigning diverse reviewers to academic papers}.~\citet{ahmed2017diverse} have studied how to assign a set of relevant and diverse experts to review academic papers/proposals.  As mentioned there, experts' diversity plays a key role in maintaining fairness in the final decision (acceptance or rejection of the paper). A general approach there is to select a ground set of $K$ features describing reviewers (\eg  affiliations, research focuses, and demographics). For each reviewer $j$, we label it with a binary vector $\chi_j \in \{0,1\}^K$ such that $\chi_{j,k}=1$ iff reviewer $j$ covers feature $k$.  For each paper-feature pair $(i,k)$, we associate it with a non-negative weight $w_{i,k}$ reflecting the degree of importance of feature $k$ to paper $i$. Under this configuration, our goal of forming a diverse team of experts for each paper can be formulated as $\max \sum_i g_i(S_i)$, where $S_i$ is the set of reviewers assigned to paper $i$, and $g_i (S_i)=\sum_k w_{i,k} \min \big(1, \sum_{j \in S_i} \chi_{j,k}\big)$ \bluee{denotes} the total weight of all covered features for paper $i$. 

\noindent \textbf{Multi-skilled task-worker assignments}. Consider a special class of crowdsourcing markets featuring that every task and worker is associated with a set of skills~\cite{barnabo2019algorithms,cheng2016task,anagnostopoulos2012online}. A natural goal there is to assign each task \bluee{to} a set of workers such that the task has as many skills covered as possible. A typical approach is as follows. We identify a ground set of $K$ skills; each task  $i$ and each worker  $j$ are labeled with a binary vector $\chi_i, \chi_j \in \{0,1\}^K$, where $\chi_{i,k}=1$ and  $\chi_{j,k}=1$  indicate that skill $k$ is requested by task $i$ and possessed by worker $j$, respectively. Thus, our goal can be formulated as $\max \sum_i g_i(S_i)$, where $S_i$ is the set of workers assigned to task $i$, and $g_i(S_i)=\sum_{k} \chi_{i,k} \cdot \min\big(1, \sum_{j \in S_i} \chi_{j,k}\big)$ denoting the number of skills requested and covered for task $i$. More generally, suppose for each task-skill pair $(i,k)$, we associate it with a non-negative weight $w_{i,k}$ reflecting the importance of skill $k$ to task $i$. Our generalized objective then can be reformulated as $\max \sum_i g_i(S_i)$, where $g_i(S_i)=\sum_{k} w_{i,k} \min\big(1, \sum_{j \in S_i} \chi_{j,k}\big)$ denoting the total weight of all covered skills for task $i$.

\noindent \textbf{Diversity maximization among online workers}. In many real-world freelancing platforms, it is highly desirable to assign a set of diverse workers to each task (\eg labeling an image and soliciting public views for a particular political topic). This becomes particularly prominent in the context of healthcare when we need to crowdsource a set of highly diversified online volunteers for medical trials. As reported in \cite{fair-scm}, biases in health data are common and can be life-threatening when the training data feeding machine-learning algorithms lack diversity.~\citet{DBLP:journals/corr/abs-2002-10697} considered an online diverse team formation problem, where the overall goal is to crowdsource a  team of diverse online workers for every (offline) task.  They defined a set of $K$ features (or clusters) reflecting workers' demographics  and thus, each work  $j$ can be modeled by a binary vector $\chi_j \in \{0,1\}^K$ such that $\chi_{j,k}=1$ indicates that work $j$ has feature $k$ (\ie belongs to cluster $k$). They associated each task-feature pair $(i,k)$ a non-negative weight $w_{i,k}$ reflecting the utility of adding one worker with feature $k$  to the team for task $i$. They proposed a utility function on each task $i$ as $g_i(S_i)=\sum_k \sqrt{w_{i,k}\cdot \sum_{j \in S_i} \chi_{j,k}}$, where $S_i$ is the team of workers assigned to task $i$. The overall objective in~\cite{DBLP:journals/corr/abs-2002-10697} is then formulated as $\sum_i g_i(S_i)$.

Here are some similarities and dissimilarities among the three examples above. First, all objectives can be formulated as the maximization of an unweighted/weighted coverage function or a more general monotone submodular function. Second, each agent in the system is associated with a finite capacity. In the context of the paper-reviewer assignment, each reviewer $j$ has a matching capacity $b_j \in \mathbb{Z}{+}$ that captures the maximum number of papers reviewer $j$ can handle, while each paper $i$ also has a capacity $b_i \in \mathbb{Z}{+}$ reflecting the number of reviewers we should allocate for paper $i$ due to the shortage of available reviewers. Similar capacity constraints exist in a wide range of real-world crowdsourcing markets: each task $i$ and worker $j$ practically have a matching capacity $b_i$ and $b_j$, respectively, where $b_i$ models the limited budget or copies of task $i$, and $b_j$ reflects the working capacity of worker $j$. Third, the first application differs from the other two in the arrival setting of agents. Consider the paper-reviewer assignment for a big conference, for example. Generally, all information about tasks (i.e., the papers to review) and workers (i.e., available reviewers) is accessible before any matching decisions, and thus, the setting is called static or offline. In contrast, in most real-world crowdsourcing markets, only part of the agents are static such as tasks, whose information is known in advance, while some agents like workers join the system dynamically~\cite{ho2012online, xu2017budgeted}. Inspired by all the insights above, we propose three generic models as follows. \emph{Throughout this paper, we state our models in the language of matching tasks and workers in a typical crowdsourcing market}.

\noindent \textbf{OFFline Capacitated Coverage Maximization} (\off).
Suppose we have a bipartite graph $G=(I,J,E)$, where $I$ and $J$ denote the sets of tasks and workers, respectively.  An edge $e=(i,j)$ indicates the feasibility or interest for worker $j$ to complete task $i$. We have a ground set $\cK$ of $K$ features. Each worker $j$ is captured by a binary vector $\chi_j \in \{0,1\}^K$ such that $\chi_{jk}=1$ iff it covers feature $k$. Each task $i$ has a weight vector $\bw_i=(w_{ik})$, where $w_{ik} \in [0,1]$ reflects the importance/weight of feature $k$ with respect to $i$. Each worker $j$ (and task $i$) has an integer capacity $b_j$ ($b_i$), which means that it can be matched with at most $b_j$ ($b_i$) different tasks (workers). Here $b_j$ reflects the working capacity of worker $j$, while $b_i$ models the number of copies or budget of task $i$. Consider an allocation $\x=(x_{ij}) \in \{0,1\}^{|E|}$, where $x_{ij}=1$ with $(i,j) \in E$ means $j$ is assigned to $i$. For each $i \in I$, let $\cN_i=\{j \in J, (i,j) \in E\}$ be the set of neighbors of $i$; similarly for $\cN_j$ with $j \in J$. We say $\x$  is \emph{feasible} or \emph{valid} iff $\sum_{i' \in \cN_j} x_{i',j} \le b_j$ and
 $\sum_{j' \in \cN_i} x_{i,j'} \le b_i$ for all $i \in I$ and $j \in J$.  We define the utility of task $i$ under $\x$ as $g_i(\x)=\sum_{k \in \cK} w_{ik} \cdot \min(1, \sum_{j: \chi_{jk}=1} x_{ij})$, \ie the total sum of weights of features covered under $\x$, and the total resulting utility under $\x$ as  $g(\x)=\sum_{i \in I} g_i(\x)$, \ie  the sum of utilities over all tasks.  Note that under the offline setting, an input instance can be specified as $\cI=(G, \{\chi_j\}, \{\bw_i\}, \{b_i, b_j\})$, which is fully accessible. We aim to compute a feasible allocation $\x$ such that $g(\x)$ is maximized.

\noindent \textbf{ONline Capacitated Coverage Maximization} (\on).
The basic setting here is the same as \off. Specifically, we assume all information of $(G, \{\chi_j\}, \{\bw_i\}, \{b_i, b_j\})$ is fully known to the algorithm, but that is only part of the input. The graph $G=(I,J,E)$ in our case should be viewed as a compatible graph, where $I$ and $J$ denote the sets of \emph{types} of \emph{offline} tasks and \emph{online} workers, respectively. Tasks are static, while workers arrive dynamically following a \emph{known independent identical distribution} (KIID) as specified as follows. We have a finite time horizon $T$, and during each round $t \in \{1,2,\ldots, T\}$ one single worker (of type) $\hat{j}$ will be sampled (called $\hat{j}$ arrives) with replacement such that $\Pr[\hat{j}=j]=r_j/T$ for all $j \in J$ with  $\sum_{j \in J} r_{j}/T=1$. Here $r_j$ is called the \emph{arrival rate} of $j$. Note that the arrival distribution $\{r_j/T\}$ is assumed \emph{independent} and \emph{invariant} throughout the $T$ rounds, and it is accessible to the algorithm. The KIID arrival setting is mainly inspired \bluee{by} the fact that we can often learn the arrival distribution from historical logs~\cite{Yao2018deep,DBLP:conf/kdd/LiFWSYL18}. Upon the arrival of every online worker $j$, we have to make an \emph{immediate and irrevocable} decision: either reject $j$ or assign it to at most $b_j$ neighbors from $\cN_j$  (subject to capacity constraints from tasks as well). Our goal is to design an allocation policy $\ALG$ such that $\E[g(\X)]$ is maximized, where $\X$ is the allocation (possibly random) output by $\ALG$, and where the expectation is taken over the randomness in the online workers' arrivals and that in $\ALG$. 

\noindent \textbf{ONline  Capacitated Submodular Maximization} (\ons). The setting is almost the same as \on except that each task $i \in I$ is associated with a general \emph{monotone submodular} utility function $g_i$ over the \bluee{ground set} of $\cN_i$ (the set of neighbors of $i$). WLOG assume $g_i(\emptyset)=0$ for all $i \in I$.

Apart from crowdsourcing markets, capacitated coverage maximization and general submodular maximization have applications in promoting diversity in other domains. For instance, they are used in crowdsourcing test platforms, where a diverse set of users is needed to test mobile apps~\cite{xie2017cqm}. These models also find applications in online recommendations~\cite{puthiya2016coverage, ge2010beyond} and document clustering~\cite{abbassi2013diversity}, where diversity is desired. 

\section{Preliminaries, Main Contributions, and Other Related Work}

Throughout this paper, we assume the total number of online arrivals, denoted by $T$, is significantly large ($T \gg 1$), and \emph{part of our results are obtained after taking $T \rightarrow \infty$}. This assumption is commonly made and practiced in the study of online-matching models with maximization of linear objectives under KIID~\cite{brubach2020online, bib:Jaillet, bib:Manshadi, bib:Haeupler, bib:Feldman}.

\noindent\textbf{Approximation ratio}. For NP-hard combinatorial optimization problems, a powerful approach is \emph{approximation algorithms}, where the goal is to design an efficient algorithm (with polynomial running time) that guarantees a certain level of performance compared to the optimal solution. In the case of a maximization problem like \off, we denote an approximation algorithm and its performance as $\ALG$, and an optimal algorithm with no running-time constraint and its performance as $\OPT$. We say that $\ALG$ achieves an approximation ratio of at least $\rho \in [0,1]$ if $\ALG \ge \rho \cdot \OPT$ for any input instances.

\noindent \textbf{Competitive ratio} (CR). Competitive ratio is a commonly used metric to evaluate the performance of online algorithms. For a given algorithm \ALG and an (online) maximization problem like \on as studied here, we denote the expected performance of \ALG on an instance $\cI$ as $\ALG(\cI)$, where the expectation is taken over the randomness in the arrivals of workers and that in \ALG. Similarly, we denote the expected performance of a \emph{clairvoyant optimal} as $\OPT(\cI)$. We say that \ALG achieves a CR of at least $\rho \in [0,1]$ if $\ALG(\cI) \ge \rho \cdot \OPT(\cI)$ for any input instance $\cI$. It is important to note that \ALG is subject to the real-time decision-making requirement, while a clairvoyant optimal \OPT is exempt from that constraint. Thus, CR captures the gap in expected performance between \ALG and \OPT due to the instant-decision requirement. In the example below, we provide an instance of \on and demonstrate that the natural algorithm \gre achieves a CR of zero. By definition, \gre will assign every arriving worker $j$ to at most its $b_j$ \emph{safe} neighbors (those that still have \bluee{the} capacity to admit workers) following a decreasing order of the marginal contribution to task utility functions resulting from adding $j$.

\begin{figure}[ht!]
\begin{minipage}{.5\linewidth}
 \begin{tikzpicture}

 \draw (0,0) node[minimum size=0.2mm,draw, thick, circle] {$1$};
  \draw (3,0) node[minimum size=0.2mm,draw, thick, circle] {{$1$}};

    \draw (3,-1) node[minimum size=1mm,draw, thick, circle] {$2$};
            \draw (3,-3) node[minimum size=1mm,draw, thick,  circle] {$n$};
             \node [blue,above] at (1.7,0) {$w=1$};
                 \node [blue,above] at (2,-0.7) {$\ep$};
                    \node [blue,above] at (2,-1.4) {$\ep$};
                       \node [blue,above] at (2,-2.1) {$\ep$};
               \node [above] at (0,0.3) {$I$};
               \node [above] at (3,0.3) {$J$};
     
\draw[ultra thick,-] (0.3,0)--(2.7,0);
\draw[ultra thick,-] (0.3,0)--(2.7,-1);
\draw[ultra thick, -] (0.3,0)--(2.7,-3);
\draw[ultra thick, -] (0.3,0)--(2.7,-2);
\draw [dotted, ultra thick](2.7,-2) -- (3.3,-2);
\end{tikzpicture}
\end{minipage}\hspace{-0.5in}
\begin{minipage}{.5\linewidth}
\begin{align*}
&I=\{1\}, J=\cK=[n]:=\{1,2,\ldots,n\}; \\
& \chi_j=\mathbf{e}_j ~(\mbox{the $j$th stardard unit vector}),\forall j \in [n]; \\
& w_{i=1,k=1}=1, w_{i=1,k}=\ep, \forall 2 \le k \le n;\\
&b_{i=1}=b_j=1, \forall j \in [n];\\
&T=n, r_j=1, \forall j \in [n];\\
&\gre\le \ep+1/n;\\
&\OPT \ge 1-1/\sfe.
\end{align*}
\end{minipage}
\caption{A toy example where  \gre achieves a competitive ratio of zero for $\on$.} \label{fig:exam-1}
\end{figure}
 
 \begin{example}\label{exam:1}
Consider an instance of \on shown in Figure~\ref{fig:exam-1}. We have a star graph $G=(I,J,E)$ with $|I|=1$, $|J|=|\cK|=n$, and $b_i=b_j=1$ for $i=1$ and all $j \in [n] \doteq \{1,2,\ldots,n\}$. Each worker (of type) $j$ covers a single feature $k=j$ for all $j \in [n]$. The weight $w_{k} \doteq w_{i=1,k}=1$ if $k=1$ and $w_k=\ep$ if $1<k \le n$. Set $T=n$ and $r_j=1$ for all $j \in [n]$, i.e., during each time $t \in [n]$, a worker $j$ will arrive uniformly at random. In our context, \gre can be interpreted as assigning whatever arriving worker to the only task in the system.

Observe that at time $t=1$, with respective probabilities of $1/n$ and $1-1/n$, it is the worker (of type) $j=1$ and $1<j \le n$ that will arrive. By the nature of \gre, we will end up with a total utility of $1$ and $\ep$ with respective probabilities of $1/n$ and $1-1/n$. Thus, we claim that $\E[\gre]=1/n \cdot 1+(1-1/n)\cdot \ep$. Recall that \bluee{the clairvoyant optimal \OPT has} the privilege to optimize its decision after observing the full arrival sequence of all workers. Note that with probability $1-1/\sfe$, worker $j=1$ will arrive at least once. Thus, \OPT will end up with a total utility of $1$ and $\ep$ with respective probabilities of $1-1/\sfe$ and $1/\sfe$. Therefore, we claim that the expected performance of \OPT should be $\E[\OPT]=1 \cdot(1-1/\sfe)+\ep \cdot 1/\sfe$. By definition, we conclude that \gre achieves a CR of zero when $\ep \rightarrow 0$ and $n \rightarrow \infty$ (i.e., it has arbitrarily bad performance).
 \end{example}
 
\noindent \textbf{Connections to existing models}. 
In the offline setting (\off), the closest model, to the best of our knowledge, is Submodular Welfare Maximization (SWM) introduced by~\citet{vondrak2008optimal}. The basic setting is as follows (rephrased in our language): We have a complete bipartite graph $G=(I,J)$, where each task $i$ is associated with a non-negative monotone submodular utility function $g_i$ over the ground set of $J$. Each task $i$ has an unbounded capacity ($b_i=\infty$), and each worker $j$ has a unit capacity ($b_j=1$). The goal is to compute a \emph{partition} ${S_i}$ of $J$ such that $\sum_{i \in I} g_i(S_i)$ is maximized. \citet{vondrak2008optimal} demonstrated that SWM can be represented as a special case of maximizing a general monotone submodular function subject to a matroid constraint, and they provided a randomized continuous greedy algorithm that achieves an optimal approximation ratio of $1-1/\sfe$. In comparison to SWM, our offline setting (\off) considers a special case where each $g_i$ is a weighted coverage function. Furthermore, we generalize SWM in the following three ways: (\textbf{F1}) $G$ can be any bipartite graph (not necessarily complete), (\textbf{F2}) the capacity $b_i$ of task $i$ can be any positive integer instead of $b_i=\infty$, and (\textbf{F3}) the capacity $b_j$ of worker $j$ can be any positive integer instead of $b_j=1$. Among these three new features, (\textbf{F2}) is perhaps the most non-trivial one since it essentially imposes a new partition-matroid constraint on all feasible allocations. \emph{It might be tempting to consider \off as a special case of SWM by creating $b_j$ copies for each worker $j \in J$ and introducing an uncapacitated version of the utility function for each task $i \in I$ as $\tilde{g}_i(S)=\max_{S': S' \subseteq S \cap \cN_i, |S'| \le b_i} g_i(S')$ for any $S \subseteq J$. However, as shown in Appendix~\ref{sec:app-a}, this reduction only yields a $(1-1/\sfe)^2$-approximate algorithm, which is significantly worse than what we present here, as shown in Theorem~\ref{thm:main-a}.}

Regarding the online setting, \citet{kapralov2013online} examined the online version of SWM under KIID, whether known or unknown. They demonstrated that \gre achieves an \emph{optimal} competitive ratio (CR) of $1-1/\sfe$. It is worth noting that the introduction of the two new features (\textbf{F2}) and (\textbf{F3}) to SWM each brings about significant algorithmic challenges to both \on and \ons scenarios. The example presented in Example~\ref{exam:1} suggests that after (\textbf{F2}) is introduced to online SWM, \gre experiences a substantial decrease in performance, shifting from being optimal (with a CR of $1-1/\sfe$) to being arbitrarily bad (with a CR of zero). Meanwhile, (\textbf{F3}) indicates that a worker can have a non-unit capacity, leading to the involvement of multiple assignments upon her arrival. This further complicates the design and analysis of algorithms.

\xhdr{Glossary of Notations}. A glossary of notations used throughout this paper is shown in Table~\ref{table:notation}.
\begin{table}[th!]
\center
\caption{A glossary of notations used throughout this paper.}
\label{table:notation}
\begin{tabular}{ll } 
 \hline
  $[n]$ & Set of integers $\{1,2,\ldots,n\}$ for a generic integer $n$.  \\[0.05cm]
$G=(I, J, E)$ & Input graph (accessible in advance), where $I$ ($J$) is the set of task (worker) types.   \\[0.1cm]
$\cK$ & Ground set of features with $|\cK|=K$. \\[0.05cm]
$\chi_j \in \{0,1\}^K$ & Characteristic  vector of worker (of type) $j$ with $\chi_{jk}=1$ indicating $j$ covers feature $k$.\\ [0.05cm]
$w_{ik} \in [0,1]$ & Weight of feature $k$ with respect to task (of type) $i$.\\ [0.05cm]
$b_{i}$ ($b_{j}$) & Matching capacity on task $i$ (worker $j$).\\ [0.05cm]
$\cN_i$  ($\cN_{j}$) & Neighbors of task $i$ (worker $j$) in the input graph with $\cN_i \subseteq J$ ($\cN_j \subseteq I$).\\ [0.05cm]
$E_i$  ($E_{j}$) & Set of edges incident to task $i$ (worker $j$) in the input graph.\\ [0.05cm]
$T$& Time horizon, \ie the number of online rounds with $T \gg 1$.\\ [0.05cm]
$r_j$ & Arrival rate of worker $j$ (in the online setting) such that $j$ arrives with probability $r_j/T$ every time. \\ [0.05cm]
$g_i$ & Generic monotone submodular utility function for task $i$ with $g_i(\emptyset)=0$.
\\ [0.05cm]
$\sfe$ vs.\ $e$ & The former represents the natural base with $\sfe \sim 2.718$, while the latter an edge $e \in E$.\\ [0.05cm]
 \hline
\end{tabular}
\end{table}

\noindent \textbf{Main Contributions}. 
In this paper, we propose three generic models of capacitated coverage and, more generally, submodular maximization, to study task-worker assignment problems that arise in gig economy platforms. For each of the three models, we construct a linear program (LP) that provides a valid upper bound for the corresponding optimal performance. Using these benchmark LPs as a foundation, we develop dependent-rounding-based (DR-based) sampling algorithms. Below are our main theoretical results. \emph{Throughout this paper, all fractional values are estimated with accuracy to the third decimal place}.

\begin{theorem}\label{thm:main-a}[Section~\ref{sec:off}]
There is a polynomial-time algorithm that achieves an \emph{optimal} approximation ratio of $1-1/\sfe$ for \off.
\end{theorem}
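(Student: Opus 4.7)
The plan is an LP-relaxation-plus-dependent-rounding argument, which is the standard template for matching-style coverage problems but must be adapted to handle both capacities $b_i$ and $b_j$ simultaneously. Matching optimality $1-1/\mathsf{e}$ on the upper side is known from Feige's hardness for Max-$k$-Coverage, so the work is really in the algorithmic lower bound.

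First, I would write a benchmark LP with variables $x_{ij}\in[0,1]$ for each $(i,j)\in E$ and auxiliary variables $y_{ik}\in[0,1]$ for each $(i,k)\in I\times\cK$, and constraints
\begin{align*}
\sum_{j\in\cN_i} x_{ij}&\le b_i,\quad \forall i\in I,\\
\sum_{i\in\cN_j} x_{ij}&\le b_j,\quad \forall j\in J,\\
y_{ik}&\le \sum_{j\in\cN_i:\chi_{jk}=1} x_{ij},\quad \forall i,k,
\end{align*}
with objective $\max\sum_{i,k} w_{ik}\, y_{ik}$. Standard concavification shows this LP upper-bounds $\OPT$ for \off, and it is solvable in polynomial time. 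Let $\{x_{ij}^\star,y_{ik}^\star\}$ denote an optimal fractional solution.

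Next, the rounding: the constraints on $\{x_{ij}\}$ are precisely the bipartite degree-constrained polytope, so I would apply the bipartite dependent-rounding scheme of Gandhi--Khuller--Parthasarathy--Srinivasan (the pivotal-rounding \bbsp\ primitive the paper already reserves a macro for). This gives a random integral $\X=(X_{ij})$ satisfying (i) marginal preservation $\Pr[X_{ij}=1]=x_{ij}^\star$, (ii) capacity feasibility $\sum_{j\in\cN_i}X_{ij}\le b_i$ and $\sum_{i\in\cN_j}X_{ij}\le b_j$ almost surely (using that $b_i,b_j$ are integers), and (iii) negative correlation among the edge-indicators incident to any common endpoint; in particular, for every task $i$ and every feature $k$,
\[
\Pr\!\Bigl[\sum_{j\in\cN_i:\chi_{jk}=1} X_{ij}=0\Bigr]\;\le\;\prod_{j\in\cN_i:\chi_{jk}=1}\bigl(1-x_{ij}^\star\bigr).
\]

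Then I would do the utility analysis feature by feature. Let $s_{ik}:=\sum_{j\in\cN_i:\chi_{jk}=1} x_{ij}^\star$, so $y_{ik}^\star=\min(1,s_{ik})$ WLOG. Using AM--GM and the negative-correlation bound,
\[
\E\!\left[\min\!\Bigl(1,\sum_{j:\chi_{jk}=1} X_{ij}\Bigr)\right]\;\ge\; 1-\prod_{j:\chi_{jk}=1}(1-x_{ij}^\star)\;\ge\; 1-\mathsf{e}^{-s_{ik}}.
\]
Combining this with the elementary inequality $1-\mathsf{e}^{-s}\ge (1-1/\mathsf{e})\min(1,s)$ for all $s\ge 0$ (verified by monotonicity of $(1-\mathsf{e}^{-s})/s$ on $(0,1]$ and by boundedness for $s\ge 1$) yields $\E[\text{coverage of }(i,k)]\ge(1-1/\mathsf{e})\,y_{ik}^\star$. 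Summing over $i,k$ with the weights $w_{ik}$ gives $\E[g(\X)]\ge(1-1/\mathsf{e})\cdot\mathrm{LP}^\star\ge(1-1/\mathsf{e})\cdot\OPT$. Optimality of $1-1/\mathsf{e}$ follows by observing that unweighted Max-$k$-Coverage is the special case with a single task $i$ of capacity $b_i=k$, all worker capacities $b_j=1$, and unit feature weights, so any $(1-1/\mathsf{e}+\epsilon)$-approximation for \off\ would contradict Feige's hardness.

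The main obstacle I anticipate is verifying that the bipartite dependent rounding simultaneously respects both sides of capacity constraints \emph{and} delivers the per-task negative-correlation inequality I need; standard expositions of \bbsp\ usually emphasize one-sided degree preservation, so I would need to check that the pivotal procedure, applied to the bipartite incidence structure induced by $E$, keeps $\sum_{j\in\cN_i}X_{ij}\in\{\lfloor\sum_j x_{ij}^\star\rfloor,\lceil\sum_j x_{ij}^\star\rceil\}$ for each $i$ and symmetrically for each $j$, and that the negative correlation among edges sharing endpoint $i$ holds regardless of the worker-side constraints. Given that, the rest is the clean LP-rounding analysis sketched above.
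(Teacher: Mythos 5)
Your proposal follows essentially the same route as the paper's proof: the same benchmark LP, the same application of the Gandhi et al.\ dependent rounding to the optimal fractional edge values, the same per-feature analysis via $1-\prod_j(1-x^\star_{ij})\ge 1-\mathsf{e}^{-s_{ik}}\ge(1-1/\mathsf{e})\min(1,s_{ik})$, and the same appeal to Feige's hardness of Maximum Coverage for optimality of the ratio. The obstacle you flag at the end is a non-issue, since the cited dependent-rounding scheme explicitly guarantees all three properties (\proa), (\prob), and (\proc) — marginal preservation, two-sided degree preservation for every $\ell\in I\cup J$, and negative correlation within every edge set $E_\ell$ — which is exactly how the paper uses it.
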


\textbf{Remarks on Theorem~\ref{thm:main-a}}. Note that \off can be viewed as a special case of maximization of a monotone submodular function subject to two \emph{partition} matroids.\footnote{The sum of weighted coverage utility functions over all tasks can be viewed as one single monotone submodular function over the ground set of all edges.} Meanwhile, \off strictly generalizes the classical Maximum Coverage Problem (MCP), which can be cast as a special case of \off with one single task. \citet{feige1998threshold} showed that MCP is NP-hard and cannot be approximated within a factor better than $1-1/\sfe$ unless $P=NP$. This suggests the optimality of the approximation ratio of $1-1/\sfe$ in Theorem~\ref{thm:main-a}. Observe that $1-1/\sfe$ is larger than the current best ratio for maximization of a monotone submodular function subject to two \emph{general} matroids, which is $1/2-\ep$ due to the work of \cite{lee10}. We believe our technique can be of independent interest and perhaps can be generalized to study the case of two or multiple general matroids.

\begin{theorem}\label{thm:main-1}[Section~\ref{sec:special}]
There is an algorithm  that achieves a competitive ratio of at least $(19-67 \cdot \sfe^{-3})/27 \sim 0.580$ for \on. In particular, it achieves an \emph{optimal}  competitive ratio (CR) of $1-1/\sfe$ for \on when every task has no capacity constraint.
\end{theorem}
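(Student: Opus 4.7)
The plan is to combine an LP-based benchmark with a dependent-rounding online sampling algorithm tailored to the coverage structure. First, I would construct an offline LP that upper bounds the clairvoyant optimum: introduce variables $x_{ij}$ for each $(i,j)\in E$, intended to represent the expected number of times worker-type $j$ is matched to task $i$ in $\OPT$, together with auxiliary variables $y_{ik}\in[0,1]$ capturing the fractional coverage of feature $k$ for task $i$. The constraints are $y_{ik}\le \sum_{j:\chi_{jk}=1} x_{ij}$, $y_{ik}\le 1$, the worker capacity $\sum_{i\in\nei_j} x_{ij}\le b_j\cdot r_j$, the task capacity $\sum_{j\in\nei_i} x_{ij}\le b_i$, and the per-type cap $x_{ij}\le r_j$ induced by the KIID arrival process. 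The objective is $\max\sum_{i,k} w_{ik}\, y_{ik}$. A short exchange argument, standard in the KIID online-matching literature, shows $\OPT\le \mathrm{LP}^\ast$, where $\mathrm{LP}^\ast$ denotes the optimal LP value.

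Second, I would describe the algorithm: solve the LP once offline to obtain $\{x_{ij}^\ast\}$, and upon the arrival of worker $\hat j$ in round $t$, invoke dependent rounding on the scaled vector $(\alpha\cdot x_{i\hat j}^\ast/r_{\hat j})_{i\in\nei_{\hat j}}$, where $\alpha\in(0,1]$ is a tuning parameter, to select a subset $S_t\subseteq \nei_{\hat j}$ of size at most $b_{\hat j}$ restricted to \emph{safe} neighbors whose task capacities have not yet been exhausted. Dependent rounding preserves marginals, enforces the worker-side cardinality constraint exactly, and guarantees negative correlation among the chosen tasks; this last property is essential for tight coverage analysis.

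Third, for each $(i,k)$ I would lower bound the probability $p_{ik}$ that feature $k$ is covered for task $i$ by the end of the horizon, then use $\E[g_i(\X)]\ge \sum_k w_{ik}\, p_{ik}$ and compare against $\sum_k w_{ik}\, y_{ik}^\ast$. In the uncapacitated-task special case, set $\alpha=1$: arrivals across the $T$ rounds are i.i.d., dependent rounding makes the marginal per-round activation probability of edge $(i,j)$ equal to $x_{ij}^\ast/T$, and a standard calculation of the form $1-(1-z/T)^T\ge 1-\sfe^{-z}$ combined with the concavity of $1-\sfe^{-z}$ in $z$ yields $p_{ik}\ge (1-1/\sfe)\, y_{ik}^\ast$, matching the hardness inherited from the offline case and proving the $1-1/\sfe$ part of the theorem. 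In the general case, one must additionally account for the probability that task $i$ is already saturated when a useful worker arrives; I anticipate calibrating $\alpha$ to a small fraction (the $1/27$ in the denominator of the ratio hints at a choice whose analysis naturally produces $(1-\alpha)^3$-type terms, e.g., $\alpha=1/3$), and bounding the conditional unsafe probability via a Markov/Chernoff estimate to obtain the stated $(19-67\sfe^{-3})/27$.

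The main obstacle will be the last step: one must simultaneously control, at the moment $\hat j$ arrives, the coupled events that (i) task $i$ still has residual capacity and (ii) edge $(i,\hat j)$ is selected by the capacity-restricted dependent-rounding procedure. A naive independence shortcut fails because the LP controls only first moments of the load on task $i$, and the restriction to safe neighbors distorts marginals. The cleanest route I see is to (a) maintain an ``attenuated'' version of the LP mass that is discounted whenever $i$ is full, (b) show that the expected residual capacity of $i$ at every round is at least $(1-\alpha)\,b_i$ for suitably small $\alpha$, and (c) combine this with the negative-correlation guarantees of dependent rounding to get a clean closed-form lower bound on $p_{ik}$ that evaluates to the advertised $(19-67\sfe^{-3})/27$.
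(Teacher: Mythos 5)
Your LP benchmark, your algorithm skeleton (solve the LP offline, then dependent-round $(x^*_{e}/r_{\hat j})_{e\in E_{\hat j}}$ on each arrival), and your analysis of the uncapacitated-task case all match the paper, and that part is correct. The gap is in the general case, where you have guessed the wrong mechanism for the constant. The paper's algorithm uses no attenuation parameter $\alpha$: it always rounds the unscaled vector and simply discards a proposed edge when task $i$ is already saturated. The constant $(19-67\cdot\sfe^{-3})/27$ does not come from a choice such as $\alpha=1/3$ or from $(1-\alpha)^3$-type terms; it comes from the worst case of a Balls-and-Bins process. Fix a task--feature pair $f=(i,k)$ and let $p=\sum_{e\in E_f}x^*_e$ and $q=\sum_{e\in E_i\setminus E_f}x^*_e$, so that $p+q\le b_i$ by the task-capacity constraint of the LP. In each round an edge of $E_f$ is proposed with probability $p/T$, an edge of $E_i\setminus E_f$ with probability $q/T$, and the process for task $i$ stops once $b_i$ edges have been matched; feature $k$ fails to be covered only if no $E_f$-edge is proposed before the bin fills. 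Taking $T\to\infty$, the success probability for $p\ge 1$ is minimized at $p=1$, $q=b_i-1$ and equals $H(q)=\int_0^1 \sfe^{-\zeta}\,\Pr[\Pois(\zeta q)\le q]\,d\zeta$; minimizing over nonnegative integers $q$ gives $H(2)=(19-67\sfe^{-3})/27$, i.e., the worst case is $b_i=3$. The case $p<1$ requires a separate but parallel computation yielding $0.580\cdot p$.

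Your proposed route (b) --- showing the expected residual capacity of $i$ is at least $(1-\alpha)b_i$ at every round --- would not suffice even if established: a first-moment bound on the residual capacity cannot by itself control the probability that $i$ is saturated at the random time a useful worker arrives, and the coupling between the event that an $E_f$-edge is proposed at round $t$ and the event that fewer than $b_i$ edges of $E_i$ were matched before $t$ is precisely what the stopping-time computation above handles. Without that computation (or an equivalent one), the claimed constant is unsupported.
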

\begin{theorem}\label{thm:main-2}[Section~\ref{sec:gen}]
There exists an algorithm that achieves a competitive ratio of at least $0.436$ for \ons when every task has constant capacity.  
\end{theorem}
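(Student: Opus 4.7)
My plan is to lift the LP-based dependent-rounding framework behind Theorem~\ref{thm:main-1} from weighted coverage to a general monotone submodular $g_i$, exploiting the constant-capacity hypothesis to keep the benchmark LP polynomial in size.

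First, I would introduce a variable $y_{i,S}\ge 0$ for every task $i \in I$ and every subset $S\subseteq \cN_i$ with $|S|\le b_i$, interpreting $y_{i,S}$ as the (suitably scaled) expected frequency with which a clairvoyant optimum commits exactly the bundle $S$ to task $i$. The benchmark LP then reads $\max \sum_{i}\sum_{S} g_i(S)\,y_{i,S}$ subject to $\sum_{S} y_{i,S} \le 1$ for every task, $\sum_{i}\sum_{S \ni j} y_{i,S}\le b_j r_j$ for every worker $j$ (its expected arrival budget), and $y_{i,S}\ge 0$. Because $b_i=O(1)$, the number of variables is $\sum_i |\cN_i|^{O(1)}$, so the LP is solvable in polynomial time, and a standard KIID coupling argument shows $\mathrm{LP}\ge \OPT$.

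Second, I would adapt \algon. Let $y^*$ be an optimal LP solution, and let $p_{ij}=\sum_{S\ni j} y^*_{i,S}$ denote the marginal that $j$ lies in $i$'s final bundle. Upon the arrival of worker $j$, run the dependent-rounding procedure of \algon on the vector $(p_{ij})_{i\in \cN_j}$, subject to $j$'s capacity $b_j$; whenever $j$ is routed to a task $i$, sample $i$'s bundle $S\ni j$ from the conditional LP distribution $y^*_{i,\cdot}/p_{ij}$. A consistency rule rejects any assignment that would force $i$ past $b_i$. For the competitive ratio I would lower-bound $\E[g_i(\hat S_i)]$ for each task $i$ by combining two ingredients: the correlation-gap inequality $F_i(x)\ge (1-1/\sfe)\, g_i^+(x)$ that holds for every monotone submodular $g_i$ (with $F_i$ the multilinear extension and $g_i^+$ the concave closure), together with the per-edge utilization bound that drives the $(19-67/\sfe^3)/27$ guarantee in the proof of Theorem~\ref{thm:main-1}, which carries over to the edge marginals $p_{ij}$ verbatim. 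Multiplying the two factors yields the claimed $0.436$.

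The hard part will be the last bound: dependent rounding couples the indicators $\mathbf{1}[j\in \hat S_i]$ across distinct workers $j$, so $\hat S_i$ is not distributed as a product over $\cN_i$, and the textbook correlation-gap inequality requires independence (or at least negative association). I plan to attack this by showing that the induced distribution on each $\cN_i$ is negatively associated, in which case the $1-1/\sfe$ correlation gap still applies; failing that, a Poisson-arrival coupling decouples distinct workers at the cost of a $o(1)$ error that vanishes as $T\to\infty$. The constant-$b_i$ hypothesis is essential not only for LP tractability but also because it lets the analysis enumerate bundles $S$ explicitly, bypassing any need to evaluate the multilinear extension algorithmically.
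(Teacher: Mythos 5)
Your architecture (a configuration LP over bundles $S\subseteq\cN_i$ with $|S|\le b_i$, edge marginals $p_{ij}=\sum_{S\ni j}y^*_{i,S}$, per-arrival dependent rounding on those marginals, and a correlation-gap inequality relating the multilinear extension to the concave closure) is essentially the paper's. Two remarks on the LP itself: you omit the per-edge constraint $\sum_{S\ni j}y_{i,S}\le r_j$, which the paper needs so that each entry $p_{ij}/r_j$ fed to dependent rounding lies in $[0,1]$; and the extra step of sampling a bundle from the conditional distribution $y^*_{i,\cdot}/p_{ij}$ is unnecessary --- the paper simply matches the rounded edges.

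The genuine gap is in your final quantitative step. Multiplying the correlation-gap factor $1-1/\sfe\approx 0.632$ by the $0.580$ utilization bound from Theorem~\ref{thm:main-1} gives roughly $0.367$, which falls short of $0.436$; moreover the $0.580$ bound is a statement about the probability that \emph{some} edge of a feature set $E_f$ is matched before the bin fills, a per-feature quantity that has no analogue once $g_i$ is a general monotone submodular function with no feature decomposition. The paper avoids the lossy product by doing a joint analysis: it conditions on the number $A$ of arrivals routed to task $i$ (distributed as $\min(b,\mathrm{Pois}(b))$), shows $\E[g_i(\cS)\mid A=\ell]\ge\bigl(1-\sfe^{-\kappa(\ell)}\bigr)\OPT_i$ with $\kappa(\ell)=1-(1-1/b)^\ell$ (and a separate bound $\OPT_i/b$ for $\ell=1$ via the extension $g^*$), and then averages over $\ell$ to get $\Phi(b)\ge 0.436$. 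You have also located the correlation difficulty in the wrong place: dependent rounding correlates the assignments of a single arriving worker across \emph{tasks}, which is harmless for a fixed task $i$ since only one coordinate of that rounding is relevant, and distinct rounds are independent. The real obstruction is that, conditional on $\ell$ arrivals at task $i$, the arrived set is a union of $\ell$ i.i.d.\ single-element (categorical) draws rather than an independent-inclusion set, so the correlation-gap inequality does not apply directly; the paper bridges this with a swap-rounding lemma showing $\E[g(\bigcup_\iota\pi^{(\iota)}(\x))]\ge\E[g(\bigcup_\iota\sigma^{(\iota)}(\x))]$, after which the multilinear-extension bound applies at the scaled marginal $\kappa(\ell)\y^*$. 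Neither your negative-association claim nor a Poissonization argument substitutes for this conditioning-plus-swap-rounding step, and without it the $0.436$ constant is not reachable.
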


\textbf{Remarks on Theorems~\ref{thm:main-1} and~\ref{thm:main-2}}. (1) \on with feature \textbf{(F2)} off (\ie no capacity constraint on tasks with $b_i=\infty$) still \emph{strictly} generalizes online SWM introduced in~\cite{kapralov2013online} due to features \textbf{(F1)} and \textbf{(F3)}. That being said, our algorithm achieves an optimal CR of $1-1/\sfe$ for \on with \textbf{(F2)} off, which matches the best possible CR for online SWM that was achieved by \greedy~\cite{kapralov2013online}. (2) Both \on and \ons strictly generalize the classical online (bipartite) matching under KIID when each task has a unit capacity (\ie all $b_i=1$). For online matching under KIID, \citet{bib:Manshadi} showed upper bounds on the competitive ratio over all possible adaptive and non-adaptive algorithms, which are $0.823$ and $1-1/\sfe\sim 0.632$, respectively. These two upper bounds apply to the $0.580$-competitive algorithm in Theorem~\ref{thm:main-1} and the $0.436$-competitive algorithm in Theorem~\ref{thm:main-2}.\footnote{Algorithms mentioned in Theorems~\ref{thm:main-1} and~\ref{thm:main-2} are both non-adaptive; see Section~\ref{sec:special} and~\ref{sec:gen} for more details.} (3) The constant-task-capacity assumption in Theorem~\ref{thm:main-2} is mainly inspired by real-world gig economy  platforms, where the capacity of every task is relatively small due to a finite number of copies and/or a limited budget~\cite{DBLP:journals/corr/abs-2002-10697,Faez-17}.

\textbf{Main Techniques}.  Our algorithms for the offline and online settings both invoke the technique of dependent rounding (DR) as a subroutine, which was introduced by \citet{gandhi2006dependent}. Recall that $G=(I,J,E)$ is \bluee{bipartite}. Let $E_\ell$ \bluee{be} the set of edges incident to $\ell$ for $\ell \in I \cup J$. Suppose each edge $e \in E$ is associated with a fractional value $z_e \in [0,1]$.      DR is such a rounding technique that takes as input a fractional vector $\z=(z_{e}) \in [0,1]^{|E|}$ \bluee{and outputs} a random binary vector $\Z=(Z_{e}) \in \{0,1\}^{|E|}$ \bluee{which satisfies} the following properties. (\proa) \textbf{Marginal distribution}: $\E[Z_{e}]=z_{e}$ for all $e \in E$;
      (\prob) \textbf{Degree preservation}: $\Pr[Z_\ell \in \{\lfloor z_\ell \rfloor, \lceil z_\ell \rceil\}]=1$ for all $\ell \in I \cup J$, where $Z_\ell \doteq \sum_{e \in E_\ell} Z_e$ and $z_\ell \doteq \sum_{e\in E_\ell} z_e$; 
      (\proc) \textbf{Negative correlation}: For any $\ell \in I \cup J$, $S \subseteq E_\ell$ and $z \in \{0,1\}$, $\Pr[\wedge_{e \in S} (Z_e=z)] \le \prod_{e \in S} \Pr[Z_e=z] $.

As for the algorithm design in the online settings, we have devised two specific linear programs (LPs) (see \LP~\eqref{LP:1} and \LP~\eqref{LP:2}) to upper bound the performance of the clairvoyant optimal solution (\OPT) for \on and \ons, respectively. Our algorithms follow a straightforward approach: we first solve the LP and extract the marginal distribution among all edges in \OPT, and then utilize it to guide the online actions. The main challenge lies in bounding the performance gap between \OPT and our algorithms. To tackle this challenge, we leverage the technique of swap rounding as introduced by \citet{chekuri2010dependent} and incorporate ideas from contention resolution schemes~\cite{CRS-14} to upper bound the gap between the multilinear relaxation and the concave closure for a monotone submodular function. Additionally, we propose two specific Balls-and-Bins models to facilitate our competitive analysis.

\xhdr{Other Related Work}.
Our offline version can be viewed as a strict special case of maximization of a general monotone submodular function (a sum of weighted coverage functions here) subject to $\ell$-matroids ($\ell=2$ here). For this problem,~\citet{lee10} gave a \emph{local-search} based algorithm that achieves an approx. ratio of $1/(\ell+\epsilon)$ for any $\ell \ge 2, \epsilon>0$.~\citet{Kanthi-17} studied a more general setting which has an intersection of $k$ matroids and a single knapsack constraint.~\citet{karimi2017stochastic} studied a stochastic version of offline  weighted-coverage maximization  but without capacity constraint. 

There is a large body of studies on different variants of submodular maximization problems. Here we list only a few samples that have considered some online arrival \bluee{settings}.~\citet{dickerson2019balancing} studied a variant of online matching model under KIID, whose objective is to maximize a single monotone submodular function over the set of all matched edges. They gave a $0.399$-competitive algorithm.~\citet{nips16Hossein} studied an extension of SWM under adversarial, where each offline agent is associated with two monotone submodular functions.  They proposed a parameterized online algorithm, which is shown to achieve good competitive ratios simultaneously for both objectives. ~\citet{korula2018online} considered SWM under the random arrival order: they showed that Greedy can beat $1/2$, which is the best competitive ratio possibly achieved for SWM under adversarial.~\citet{rawitz2016online} introduced the Online Budgeted Maximum Coverage problem, where we have to select a collection of sets to maximize the coverage, under the constraint that \bluee{the} cost of all sets is at most a given budget. They considered the adversarial arrival setting but under preemption, \ie irrevocable rejections of previously selected sets are \bluee{allowed}.~\citet{anari2017robust} proposed a robust version of online submodular maximization problem, where during each round of \bluee{the} online phase, we need to select a set and then followed by the adversarial arrivals of $k$ monotone submodular functions. The goal is to maximize the sum of min \bluee{values} over the $k$ functions during the whole online phase.  There are several other studies that have considered online maximization of a general submodular function  under the adversarial arrival order; see, \eg \cite{buchbinder-15,chan2017}.

\section{OFFline Capacitated Coverage Maximization (\off)}\label{sec:off}
Throughout this section, we use $\OPT$ to denote \bluee{both} the optimal algorithm and the corresponding performance. For each edge $e \in E$, let $x_e=1$ indicate $e$ is matched in \OPT. Let  $\cF \doteq I \times \cK$ be the collection of all task-feature pairs. Recall that for each node $\ell \in I\cup J$, $E_\ell$ denotes the set of edges incident to $\ell$. For each $f=(i,k) \in \cF$, let $w_f:=w_{i,k}$, and $z_{f}=1$ indicate that feature $k$ is covered for task $i$ in \OPT; let $E_f =\{e=(i,j) \in E_i:  \chi_{jk}=1\}$, the subset of edges incident to $i$ that can help cover feature $k$. Consider the below relaxed linear program (\LP).
\begingroup
\allowdisplaybreaks
\begin{alignat}{2}
 \max &  ~~\sum_{f \in \cF} w_{f} \cdot z_{f}  && \label{obj-1} \\
 & z_{f} \le \min \big(1, \sum_{e \in E_{f}} x_{e}\big)  &&~~  \forall f \in \cF \label{cons:f} \\ 
 &\sum_{e \in E_i} x_{e} \le b_i   &&  \forall i \in I \label{cons:i} \\ 
&\sum_{e \in E_j} x_{e} \le b_j  &&  \forall j \in J \label{cons:j} \\ 
 &  0 \le x_{e},z_{f} \le 1  && ~~ \forall e\in E, f\in \cF. \label{cons:e}
 \end{alignat}
\endgroup
 
\begin{lemma}\label{lem:off-1}
The optimal value of \LP~\eqref{obj-1} is a valid upper bound on the optimal performance for \off.
\end{lemma}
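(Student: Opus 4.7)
The plan is to establish the lemma by exhibiting, for every optimal integral allocation to \off, a feasible solution to \LP~\eqref{obj-1} whose objective value equals the utility of that allocation. Since the LP is a maximization problem, this immediately implies that its optimum is at least $\OPT$.

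Concretely, I would let $\mathbf{x}^\star=(x_e^\star)\in\{0,1\}^{|E|}$ denote an optimal feasible allocation for \off with total utility $\OPT=g(\mathbf{x}^\star)$. I would then set $x_e:=x_e^\star$ for every $e\in E$ and, for each feature pair $f=(i,k)\in\cF$, define
\[
z_f := \min\!\Big(1,\ \sum_{e\in E_f} x_e^\star\Big),
\]
which is precisely the $\{0,1\}$-indicator of whether feature $k$ is covered for task $i$ under $\mathbf{x}^\star$ (since $x_e^\star$ is integral, the sum is a nonnegative integer and the min lands in $\{0,1\}$).

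Next I would verify feasibility in \LP~\eqref{obj-1}. Constraint~\eqref{cons:f} holds by construction (with equality). Constraints~\eqref{cons:i} and~\eqref{cons:j} inherit directly from feasibility of $\mathbf{x}^\star$ for \off, since that feasibility is defined by exactly those capacity inequalities $\sum_{j'\in\cN_i}x_{ij'}^\star\le b_i$ and $\sum_{i'\in\cN_j}x_{i'j}^\star\le b_j$. Constraint~\eqref{cons:e} is immediate because all $x_e$ and $z_f$ values lie in $\{0,1\}$. Finally, the objective under this candidate solution evaluates to
\[
\sum_{f\in\cF} w_f z_f \;=\; \sum_{i\in I}\sum_{k\in\cK} w_{ik}\cdot\min\!\Big(1,\sum_{j:\chi_{jk}=1} x_{ij}^\star\Big) \;=\; \sum_{i\in I} g_i(\mathbf{x}^\star) \;=\; g(\mathbf{x}^\star) \;=\; \OPT,
\]
using the definition of $g_i$ given in the \off model (noting $E_f=\{(i,j)\in E_i:\chi_{jk}=1\}$ is exactly the set of edges whose selection can cover feature $k$ for task $i$).

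Since this $(\mathbf{x},\mathbf{z})$ is feasible for the LP and attains objective $\OPT$, the optimum of \LP~\eqref{obj-1} is at least $\OPT$, proving the lemma. There is no real obstacle here: the only subtlety worth spelling out is that the $\min(1,\cdot)$ in constraint~\eqref{cons:f} correctly linearizes the coverage indicator in $g_i$, which is why we obtain equality between the LP objective on this solution and the combinatorial utility rather than merely an inequality.
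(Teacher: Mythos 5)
Your proof is correct and follows the same route as the paper, which simply observes that restricting all $x_e,z_f$ to $\{0,1\}$ turns \LP~\eqref{obj-1} into an integer program whose optimum equals $\OPT$, so the relaxation upper-bounds it. You have merely spelled out that observation in full detail (constructing the feasible LP point from an optimal allocation and checking that the $\min(1,\cdot)$ constraint linearizes the coverage indicator), which is a fine and complete elaboration of the paper's one-line justification.
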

Note that when we require all $x_e, z_{f} \in \{0,1\}$, \LP~\eqref{obj-1} is reduced to an IP program whose optimal value is exactly equal to \OPT. That's why the relaxed  \LP~\eqref{obj-1} offers a valid upper bound for \OPT.  Our DR-based approximation algorithm is formally stated \bluee{below}. 
\vspace{-0.2in}
\begin{algorithm}[ht!] 
\begin{algorithmic}[1]
\DontPrintSemicolon
\STATE Solve \LP~\eqref{obj-1} to get an optimal solution $\x^*=(x^*_{e})$. \;
Apply dependent rounding (DR) in~\cite{gandhi2006dependent} to $\x^*$ and let $\X^*=(X^*_{e})$ be the random binary vector.\;
\STATE Match all edges $e$ with $X^*_{e}=1$. 
\caption{A dependent-rounding (DR) based algorithm for \off (\algoff).}
\label{alg:off}
\end{algorithmic}
\end{algorithm}
\vspace{-0.2in}

Observe that $\X^*=(X^*_{e})$ will be a feasible allocation with probability one. Consider a given $\ell \in I\cup J$, we have  $\sum_{e \in E_\ell} X^*_e \le \lceil \sum_{e\in E_\ell} x^*_e \rceil \le b_\ell$. The first inequality is due to (\prob) from DR, while the second follows from the feasibility of $\x^*$ to \LP~\eqref{obj-1}.

 \begin{proof}[Proof of Theorem~\ref{thm:main-a}.]
We first show that $\E[\algoff] \ge (1-1/\sfe) \LP\eqref{obj-1}$. By Lemma~\ref{lem:off-1}, we have $\E[\algoff] \ge 
 (1-1/\sfe)\OPT$, which establishes the approximation ratio. Consider a given pair $f=(i,k)$, and let $Z_{f}=1$ indicate that feature $k$ is covered for task $i$ in $\algoff$. Observe that $Z_{f}=1$ iff at least one edge $e \in E_{f}$ is matched in \algoff. 
 \begin{align*}
& \E[Z_{f}]=\Pr\big[\vee_{e\in E_{f}} (X^*_{e}=1)\big]=1-\Pr\big[\wedge_{e \in E_{f}} (X^*_{e}=0)\big]\\
 &\ge 1-\prod_{e \in E_{f}} \Pr[X^*_{e}=0]~~\mbox{\big(due to (\proc) of DR\big)} =  1-\prod_{e \in E_{f}} (1-x_{e}^*)~~\mbox{\big(due to (\proa) of DR\big)}.
 \end{align*}
 Let $\alp \doteq \sum_{e \in E_f} x_{e}^*$. Note that $\prod_{e \in E_f} (1-x_{e}^*) \le \sfe^{-\alp}$. Therefore, $\E[Z_{f}] \ge 1-\sfe^{-\alp}$.  Let $z_{f}^*$  be the optimal value on $f$ in \LP~\eqref{obj-1}. The optimality of $z^*_{f}$ suggests that $z^*_{f}=\min(1, \alp)$. Thus,
$\frac{\E[Z_{f}]}{z^*_{f}} \ge \frac{1-\sfe^{-\alp}}{\min(1, \alp)} \ge 1-1/\sfe$,
 where the second inequality is tight when $\alp=1$. Therefore, we claim that $\E[Z_{f}] \ge (1-1/\sfe) \cdot z^*_{f}$ for all $f \in \cF$. By linearity of expectation, we have
\begin{align*}
\E[\algoff] &=\E\big[\sum_{f \in \cF} w_{f}\cdot Z_{f}\big] \ge (1-1/\sfe) \cdot \sum_{f \in \cF}w_{f} \cdot z^*_{f}\ge (1-1/\sfe)\cdot \OPT.
\end{align*} 
Thus, we establish our result. The optimality of the ratio follows from the work~\citet{feige1998threshold}, which showed that unless $P=NP$, the classical Maximum Coverage Problem (MCP) cannot be approximated within a factor better than $1-1/\sfe$, while MCP can be cast as a special case of \off.
  \end{proof}
 

\section{Online Capacitated Coverage Maximization (\on)}\label{sec:special}
\subsection{A Benchmark Linear Program (\LP) and a DR-Based Sampling Algorithm}\label{sec:lp-1}
We present a specific benchmark LP for \on when every task is associated with a weighted coverage utility function.  For each edge $e=(i,j) \in E$, let $x_{e} \in [0,1]$ the probability that $j$ is assigned to $i$ in a clairvoyant optimal (\OPT). Note that for coverage utility functions, \OPT will have no incentive to assign $j$ to $i$ more than once when worker $j$ has multiple online arrivals. Similar to \off, let $\cF=I \times \cK$ be the collection of all task-feature pairs. For each $f=(i,k) \in \cF$, let $z_{f}$ denote the probability that feature $k$ is covered for task $i$ in \OPT. Recall that for each $f=(i,k)$, $w_f:=w_{i,k}$, and $E_f$ denotes the set of edges incident to $i$ that can help cover feature $k$. Consider the following LP.
\begingroup
\allowdisplaybreaks
\begin{align}
\max  & \sum_{f \in \cF} w_{f} \cdot z_f && \label{LP:1}\\
 & z_{f} \le \min \big(1, \sum_{e \in E_{f}} x_{e}\big),  &&  \forall f \in \cF \label{cons:fa} \\ 
 &\sum_{e \in E_i} x_{e} \le b_i,   &&  \forall i \in I \label{cons:ia} \\ 
&x_e \le r_j, && \forall e\in E_j, \forall j \in J \label{cons:eja}\\
&\sum_{e \in E_j} x_{e} \le b_j \cdot r_j,  &&  \forall j \in J \label{cons:ja} \\ 
 &  0 \le x_{e},z_f \le 1,  &&  \forall e\in E, f\in \cF. \label{cons:ea}
\end{align}
\endgroup
Note that Constraint on each $j$ in \LP~\eqref{LP:1} differs from that in \LP~\eqref{obj-1}. Though the two LPs look similar, they serve essentially different purposes. 
\begin{lemma}\label{lem:benchmark}
The optimal value of \LP~\eqref{LP:1} is a valid upper bound on the expected performance of a clairvoyant optimal (\OPT) for \on. 
\end{lemma}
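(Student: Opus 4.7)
\textbf{Proof proposal for Lemma~\ref{lem:benchmark}.} The plan is the standard ``take the marginals of an optimal randomized policy and show they form a feasible LP solution whose objective equals $\E[\OPT]$.'' Fix any clairvoyant optimal policy \OPT. For each edge $e=(i,j) \in E$, let $x_e^*=\Pr[e \text{ is matched in \OPT}]$, where the probability is over the KIID arrival sequence and any internal randomness of \OPT; note that since each $g_i$ is a weighted coverage function, \OPT loses nothing by assigning worker $j$ to task $i$ at most once, so $x_e^*$ is simply a probability. For each $f=(i,k)\in \cF$, let $z_f^*=\Pr[\text{feature }k \text{ is covered for task }i\text{ in \OPT}]$. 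I will check that $(\x^*,\z^*)$ is feasible for \LP~\eqref{LP:1} and that its objective equals $\E[\OPT]$.

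The constraint \eqref{cons:fa} follows from the union bound: feature $k$ is covered for $i$ only if some edge in $E_f$ is matched, hence $z_f^* \le \min\bigl(1,\sum_{e \in E_f} \Pr[e\text{ matched}]\bigr)=\min(1,\sum_{e\in E_f} x_e^*)$. Constraint \eqref{cons:ia} follows by taking expectations of the (almost sure) task-capacity inequality $\sum_{e \in E_i}\mathbf{1}[e\text{ matched}] \le b_i$. The objective-equality step uses linearity of expectation:
\begin{equation*}
\E[\OPT]=\sum_{i \in I}\E[g_i(\cdot)]=\sum_{i,k} w_{ik}\cdot \Pr[k\text{ covered for }i]=\sum_{f \in \cF} w_f \cdot z_f^*.
\end{equation*}

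The main obstacle, and the only step that actually uses the KIID structure, is verifying the two worker-side constraints \eqref{cons:eja} and \eqref{cons:ja}. Let $N_j$ denote the (random) total number of arrivals of worker type $j$ over the $T$ rounds; under KIID, $\E[N_j]=T\cdot (r_j/T)=r_j$. For \eqref{cons:eja}, observe that \OPT can match the edge $e=(i,j)$ only on a round where a copy of $j$ arrives, so $\{e\text{ matched}\} \subseteq \{N_j \ge 1\}$, and Markov's inequality gives
\begin{equation*}
x_e^* \le \Pr[N_j \ge 1] \le \E[N_j]=r_j.
\end{equation*}
For \eqref{cons:ja}, note that on any sample path the total number of assignments involving worker $j$ is at most $b_j\cdot N_j$ (each of the $N_j$ arrivals consumes at most $b_j$ units of $j$'s capacity), so taking expectations yields
\begin{equation*}
\sum_{e \in E_j} x_e^* = \E\Bigl[\sum_{e \in E_j}\mathbf{1}[e\text{ matched}]\Bigr]\le b_j\cdot \E[N_j]=b_j\cdot r_j.
\end{equation*}
Putting the three ingredients together, $(\x^*,\z^*)$ is feasible for \LP~\eqref{LP:1} with objective value exactly $\E[\OPT]$, so the optimal LP value is an upper bound on $\E[\OPT]$, as claimed.
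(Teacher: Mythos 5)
Your proposal is correct and follows essentially the same route as the paper's proof: interpret $x_e$ and $z_f$ as the matching and coverage probabilities under a clairvoyant optimal, check that the objective equals $\E[\OPT]$ by linearity of expectation, and verify each constraint by taking expectations of the pathwise capacity bounds. The only cosmetic difference is that you bound $\Pr[N_j\ge 1]\le \E[N_j]=r_j$ via Markov's inequality, whereas the paper computes $\Pr[N_j\ge 1]=1-\sfe^{-r_j}\le r_j$ explicitly; both yield Constraint~\eqref{cons:eja}.
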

 \begin{proof}
 By definitions of $\{w_f, z_f\}$, we can verify that Objective~\eqref{LP:1} encodes the exact expected performance of \OPT. It would suffice to justify all constraints in~\LP~\eqref{LP:1} for \OPT. Constraint~\eqref{cons:fa} follows from the definitions of $\{x_{e}\}$ and $\{z_{f}\}$.  As for Constraint~\eqref{cons:ia}: The \bluee{left-hand} side (LHS) is equal to the expected number of workers assigned to task $i$; thus, it should be no larger than $b_i$, which is the matching capacity of task $i$. Constraint~\eqref{cons:eja}: the probability that each given edge $e=(i,j)$ gets assigned ($x_e$) should be no more than that $j$  arrives at least once, which is $1-\sfe^{-r_j}$; thus, $x_e \le 1-\sfe^{-r_j} \le r_j$. For Constraint~\eqref{cons:ja}: Note that the LHS is equal to the expected number of times worker $j$ gets assigned; thus, it should be no more than $r_j \cdot b_j$, which is the expected number of online arrivals of $j$ multiplied by its capacity. The last Constraint~\eqref{cons:ea} is true since $\{x_e,z_f\}$ are all probability values. 
 \end{proof}
 Let $\{x_{e}^*, z_{f}^*\}$ be an optimal solution to the benchmark LP~\eqref{LP:1}. Our DR-based sampling algorithm, denoted by $\algon$, is formally stated in Algorithm~\ref{alg:on}. 
\vspace{-0.2in}
  \begin{algorithm}[ht!]
 \begin{algorithmic}[1]
 \DontPrintSemicolon
\STATE \textbf{Offline Phase}:  \;
\STATE Solve \LP~\eqref{LP:1} to get an optimal solution $\x^*=(x^*_{e})$. \;
\STATE \textbf{Online Phase}: \;
\FOR{$t=1,\ldots,T$}
\STATE Let an online worker (of type) $j$ arrives at time $t$. Apply dependent rounding (DR) in~\cite{gandhi2006dependent} to the vector 
$\{x^*_e/r_j| e\in E_j\}$, and let $\{ X^*_{e}| e \in E_j\}$ be the random binary vector output.\label{alg:step-2a}\;
\STATE Match the edge $e=(i,j) \in E_j$ if $X^*_{e}=1$, $e$ is not matched before, and $i$'s capacity remains.\label{alg:step-2b} \ENDFOR
\caption{A dependent-rounding (DR) based sampling algorithm for \on ($\algon$).}
\label{alg:on}
 \end{algorithmic}
\end{algorithm}
\vspace{-0.2in}

\textbf{Remarks on $\algon$}. (i)  In Step~\eqref{alg:step-2a}, the vector $\{x^*_e/r_j| e\in E_j\}$ satisfies that each entry $x_e^*/r_j \in [0,1]$ due to Constraint~\eqref{cons:eja} and the total sum  of all entries should be no more than $b_j$ from Constraint~\eqref{cons:ja}. (ii) In Step~\eqref{alg:step-2b}, we are guaranteed that the capacity of worker $j$ will never be violated since $\sum_{e\in E_j}X^*_e \le \lceil \sum_{e\in E_j}x^*_e/r_j\rceil \le b_j$ thanks to  (\prob) of DR. (iii) Due to potentially multiple arrivals of worker $j$, edge $e \in E_j$ can be possibly matched before upon the arrival of $j$ at $t$. We can ignore $e$ in this case since it adds nothing for a second match. 
  
 \subsection{Proof of the Unconstrained Task-Capacity Case in Theorem~\ref{thm:main-1}} \label{sec:thm1b}
 \begin{proof}
In this case, we can simply ignore the concern that if $i$ has reached the capacity in Step~\ref{alg:step-2b} of \algon. Consider a given pair $f=(i,k)\in \cF$, and let $Z_{f}=1$ indicate feature $k$ is covered for task $i$ in \algon. Observe that $Z_f=1$ iff one edge $e=(ij) \in E_f$ arrives (\ie $j$ arrives) and $e$ gets rounded ($X^*_e=1$) at that time. Consider a given time $t$ and assume $Z_f=0$ at (the beginning of) $t$. We see that $Z_f=1$ at the end of $t$ with probability equal to
\begin{align}
\sum_{e=(i,j) \in E_f} \Pr\big[\mbox{$j$ arrives}\big] \cdot\Pr[X_e^*=1] =\sum_{e=(ij) \in E_f} (r_j/T)\cdot (x_e^*/r_j)=\sum_{e \in E_f}x_e^*/T \doteq \alp/T \label{eqn:thm-2},
\end{align}
where $\alp\doteq \sum_{e \in E_f} x_e^*$. Here the first equality on~\eqref{eqn:thm-2} is partially due to (\proa) of DR. Note that $Z_f=0$ iff $Z_f$ never gets updated to $1$ over all the $T$ rounds. Thus, 
\[
\Pr[Z_f=1]=1-\Pr[Z_f=0]=1-(1-\alp/T)^T \ge 1-\sfe^{-\alp}.
\]
Let $z_f^*$ be the optimal value in \LP~\eqref{LP:1} and we see $z_f^*=\min(1,\alp)$. Therefore, 
\[\frac{\E[Z_{f}]}{z^*_{f}} \ge \frac{1-\sfe^{-\alp}}{\min(1, \alp)} \ge 1-1/\sfe,
\]
where the last inequality is tight when $\alp=1$. Thus, we claim that $\E[Z_{f}] \ge (1-1/\sfe) z^*_{f}$ for all $f\in \cF$. By linearity of expectation, we have 
\begin{align*}
\E[\algon] =\E\big[\sum_{f \in \cF} w_{f}Z_{f}\big] \ge (1-1/\sfe) \sum_{f \in \cF}w_{f} z^*_{f}= (1-1/\sfe)\cdot \mbox{\LP-\eqref{LP:1}} \ge (1-1/\sfe) \cdot \OPT,
\end{align*} 
 where \LP-\eqref{LP:1} and \OPT denote the respective optimal values of the linear program and a clairvoyant optimal, and the last inequality follows from Lemma~\ref{lem:benchmark}. Thus, we claim that $\algon$ achieves a CR at least $1-1/\sfe$. The optimality is due to the work~\cite{kapralov2013online}, which showed that no algorithm can beat $1-1/\sfe$ for online SWM under KIID even when all tasks take unweighted coverage functions. 
\end{proof} 
 
 \subsection{Proof of the General Case in Theorem~\ref{thm:main-1}}\label{sec:thm1a}
 We consider \on when each task $i$ is associated with an arbitrary integer capacity $b_i$.
\begin{theorem}\label{thm:main3}
For each $f=(i,k) \in \cF$, let $Z_{f}=1$ indicate that feature $k$ is covered for task $i$ in \algon.
\[
\E[Z_{f}] \ge 0.580 \cdot\min\Big(1, \sum_{e \in E_{f}}x^*_{e} \Big), ~\forall f \in \cF.
\]
\end{theorem}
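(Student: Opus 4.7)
The plan is to bound $\E[Z_f]$ for each fixed $f=(i,k)\in\cF$ by carefully accounting for two sources of loss relative to the fractional benchmark: the randomness in whether some edge $e\in E_f$ is actually presented at a round (arrival plus DR), and the capacity attrition at task $i$ that may prevent a presented edge from being matched. Write $\alp := \sum_{e\in E_f} x_e^*$; the target is $\E[Z_f] \ge 0.580\cdot \min(1,\alp)$. By (\proa) of DR the per-round attempt probability for each $e=(i,j)\in E_j$ equals $x_e^*/T$, by (\proc) these attempts are negatively correlated across edges in $E_j$ within a single round, and different rounds are independent. Constraint~\eqref{cons:ia} gives the load condition $\sum_{e\in E_i} x_e^* \le b_i$, which will drive the capacity-attenuation calculation.

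My first step is to reduce the analysis to a clean Balls-and-Bins model centered at task $i$: a bin of capacity $b_i$ with an independent stream of attempts across $T$ rounds, where each $e\in E_i$ is attempted with probability $x_e^*/T$ per round (subject to per-round negative correlation inherited from DR), succeeding iff the bin is not yet full and $e$ has not succeeded before. This is the \bbsp\ model flagged in Section~2. Using (\proa) and (\proc), together with a Feige-style coverage argument over $E_f$, I will show that the probability that some edge of $E_f\subseteq E_i$ succeeds in this abstracted model lower bounds $\Pr[Z_f=1]$ under \algon, so it suffices to analyze the cleaner model.

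The heart of the argument is an inequality of the form $\Pr[Z_f=1]\ge h(\alp,\beta)$, where $\beta := \sum_{e\in E_i\setminus E_f} x_e^*$ is the competing load on $i$. I would first show that the adversarial configuration saturates $\alp+\beta=b_i$ and concentrates the competing load on edges disjoint from $E_f$, so that $h$ is decreasing in $\beta$ and increasing in $\alp$. After a Poissonization/limit argument as $T\to\infty$, the per-round product collapses into an expression involving $\sfe^{-\alp}$ and $\sfe^{-\beta}$. Minimizing $h(\alp,\beta)/\min(1,\alp)$ over $(\alp,\beta)$ with $\alp+\beta\le b_i$ and $b_i\in\{1,2,3,\ldots\}$ then produces the constant $(19-67\sfe^{-3})/27\sim 0.580$, with the worst case arising in a small-capacity regime (the $\sfe^{-3}$ indicating $b_i=3$).

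The main obstacle will be the Balls-and-Bins lower bound itself, because of the coupling between "is $e$ attempted at round $t$" and "is $i$ already full at round $t$"; both involve the DR randomness from earlier rounds in which the endpoints of $i$'s neighborhood arrived. My plan is to fix $e\in E_f$ and condition on all attempts outside of $\{e\}$: by (\proc) the number of past successes into $E_i\setminus\{e\}$ is stochastically dominated by an independent-Bernoulli counterpart, so the event "$i$ is not yet full" admits a closed-form lower bound. Summing this per-round success probability across $t$ and across $e\in E_f$ via $1-\prod_e(1-p_e)\ge 1-\sfe^{-\sum_e p_e}$, and then optimizing over the worst-case triple $(\alp,\beta,b_i)$, will deliver the claimed $0.580$ bound. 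The numerical minimization is routine once the functional form is in hand; the delicate step is the negative-correlation estimate that converts the per-round dependent rounding into an essentially independent process for analysis.
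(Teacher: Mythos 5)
Your overall skeleton matches the paper's: reduce the analysis at task $i$ to a single-bin Balls-and-Bins process with two ball types of per-round rates $p/T=\sum_{e\in E_f}x_e^*/T$ and $q/T=\sum_{e\in E_i\setminus E_f}x_e^*/T$, argue the worst case saturates $p+q=b_i$ with $p=1$, Poissonize as $T\to\infty$, and minimize over the integer capacity to land on $(19-67\sfe^{-3})/27$ at $b_i=3$. That is exactly the paper's route, and your identification of the worst-case regime is correct.

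However, there are two concrete problems with how you propose to execute it. First, the ``delicate negative-correlation estimate'' you build the argument around is addressing a coupling that does not exist, and (\proc) plays no role here. Within a single round exactly one worker $j$ arrives, and at most one edge of $E_i$ (namely $(i,j)$) can be proposed; so the events ``a type-I edge is proposed at round $t$'' and ``a type-II edge is proposed at round $t$'' are mutually exclusive with exact probabilities $p/T$ and $q/T$ by (\proa), and they are independent of the fullness of $i$ at time $t$, which is determined by earlier rounds. The reduction to the Balls-and-Bins model is therefore exact, not a stochastic-domination bound; conditioning on attempts outside $\{e\}$ and dominating by independent Bernoullis is unnecessary machinery that risks obscuring (or weakening) the exact computation. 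Second, your plan to combine per-round success probabilities ``across $t$ and across $e\in E_f$ via $1-\prod_e(1-p_e)\ge 1-\sfe^{-\sum_e p_e}$'' does not capture the event you need, which is that the \emph{first} type-I proposal occurs while the bin still has residual capacity. The paper instead computes $\Pr[Z=1]=\sum_t \frac{p}{T}(1-\frac{p}{T})^{t-1}\Pr[\Bi(t-1,\frac{q}{T-p})\le b-1]$ and passes to the limit $H(q)=\int_0^1 \sfe^{-\zeta}\Pr[\Pois(\zeta q)\le q]\,d\zeta$ (with $q=b-1$); a Feige-style product bound over edges and rounds loses the joint constraint between the first type-I arrival time and the accumulated type-II load. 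Finally, to make ``minimize over $b$'' rigorous you also need a uniform lower bound for large $b$ (the paper uses a Poisson upper-tail bound to show $H(q)\ge 0.582$ for $q>100$), which your sketch omits. As written, the proposal would need the central integral formula derived and these gaps filled before the constant $0.580$ is actually established.
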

We defer the proof of Theorem~\ref{thm:main3} to \bluee{the} next section and first give a full proof of Theorem~\ref{thm:main-1} based on Theorem~\ref{thm:main3}.
\begin{proof}
For \bluee{ease} of notation, we use \LP-\eqref{LP:1} to refer to both \bluee{the} LP itself and the corresponding optimal value. By definition of all tasks' utility functions and the linearity of expectation, we have
\begin{align*}
\E[\algon] &=\sum_{f\in \cF}w_{f} \cdot \E[Z_{f}] \ge  0.580  \sum_{f \in \cF}w_{f} \cdot \min\Big(1, \sum_{e \in E_{f}}x^*_{e} \Big) \ge 0.580  \sum_{f \in \cF}w_{f} \cdot z_{f}^*=0.580 \cdot \mbox{\LP-\eqref{LP:1}}.
\end{align*}
 Note that the second inequality above is due to Constraint~\eqref{cons:fa} of~\LP-\eqref{LP:1}. By Lemma~\ref{lem:benchmark},
 we claim that \algon achieves a CR at least $0.580$. Thus, we complete the first part of Theorem~\ref{thm:main-1}. 
\end{proof}
 \subsubsection{Proof of Theorem~\ref{thm:main3}}\label{sec:cov-3}
Focus on a given task-feature pair $f=(i,k) \in \cF$. Observe that feature $k$ will be covered for $i$ iff one edge $e \in E_{f}$ is matched before task $i$ exhausts its capacity $b_i$. Note that during each round $t$, $\algon$ will match an edge $e=(i,j) \in E_{f}$ with probability $\sum_{e=(i,j)\in E_{f}}(r_j/T) \cdot (x_{e}^*/r_j)=\sum_{e \in E_{f}}x_{e}^*/T \doteq p/T$; meanwhile, $\algon$ will match an edge $e \in (E_i-E_{f})$ with probability $\sum_{e \in (E_i-E_{f})} x_{e}^*/T \doteq q/T$. By the nature of \algon, it will keep on sampling edges from $E_{f}$ and $(E_i-E_{f})$ with respective probabilities $p/T$ and $q/T$ during each round until either $b_i$ edges are matched or we reach the last round $T$. Note that from Constraint~\eqref{cons:ia} of \LP-\eqref{LP:1}, we have $p+q \le b_i$. Let $Z=1$ indicate that at least an edge $e\in \cN_{f}$ gets matched by the end. The result in Theorem~\ref{thm:main3} can be equivalently stated as
$\E[Z] \ge 0.580 \cdot \min (1,p)$, where $p:=\sum_{e\in E_f}x_e^*$.

Let us treat the task $i$ as a bin with a capacity $b:=b_i$ and edges from $E_{f}$ and $E_i-E_{f}$ as two types of balls. Then we can restate the question above alternatively as a Balls-and-Bins problem as follows.

\noindent \textbf{A Balls-and-Bins Model (BBM)}.  Suppose we have one single bin and two types of balls, namely type I and type II. We have $T$ rounds and during each round $t \in [T]$, one ball of \bluee{type} I and type II will be sampled with respective probabilities $p/T$ and $q/T$ (with replacement)\footnote{Note that $\algon$ may match multiple different edges in $E_f$, though \bluee{every} single edge will be matched at most once. That's why a ball of type I should be sampled with replacement.}; and with probability $1-(p+q)/T$, no ball will be sampled. Here we assume $T \gg b\ge 1 $ and $0\le p,q \le b$ and $p+q\le b$. The bin has a capacity of $b$ in the way that the sampling process will stop either the bin has $b$ balls (copies will be counted) or we reach the last round $t=T$. Let $Z$ indicate that at least one ball of type I will be sampled by the termination of BBM. We aim to prove that $\E[Z] \ge 0.580 \cdot \min (1,p)$. We split the whole proof into the following two lemmas.

\begin{lemma}\label{lem:cov-1}
If $p \ge 1$, we have $\E[Z] \ge 0.580 $.
\end{lemma}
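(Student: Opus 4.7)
The plan is to take the Balls-and-Bins model to its $T\to\infty$ limit and then minimize $\E[Z]$ over the feasible region $\{p\ge 1,\,q\ge 0,\,p+q\le b,\,b\in\mathbb{Z}_+\}$. In the limit, the sampling process becomes a Poisson arrival process on $[0,1]$ with total rate $\lambda := p+q$, each arrival being independently type~I with probability $p/\lambda$ and type~II with probability $q/\lambda$; the bin stops at $\tau := \min(S_b, 1)$, where $S_b$ is the time of the $b$-th overall arrival, and $Z=0$ iff every arrival in $[0,\tau]$ is of type~II. Conditioning on the total number of arrivals $N\sim\mathrm{Poisson}(\lambda)$ and using Poisson thinning, one obtains the closed form
\[
\Pr[Z=0] \;=\; e^{-\lambda}\sum_{n=0}^{b-1}\frac{q^n}{n!} \;+\; \Bigl(\frac{q}{\lambda}\Bigr)^{\!b}\!\Bigl(1 - e^{-\lambda}\sum_{n=0}^{b-1}\frac{\lambda^n}{n!}\Bigr).
\]
A short argument (e.g., an explicit coupling between the discrete Bernoulli sampling and the Poisson process, or a direct expansion of $(1-(p+q)/T)^T$) shows that the finite-$T$ value of $\Pr[Z=0]$ is dominated by the Poisson value, so it suffices to work in the limit throughout.

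Next I would cut the three-variable optimization down to a one-dimensional integer minimization. For fixed $(p,b)$, a coupling that thins or thickens only the type~II component while leaving type~I untouched shows $\Pr[Z=0]$ is non-decreasing in $q$, so the minimum of $\E[Z]$ lies on the facet $q=b-p$. On that facet $\lambda = b$ is constant, and differentiating the closed form in $p$ alone yields
\[
\frac{\partial \E[Z]}{\partial p}\bigg|_{q=b-p} \;=\; e^{-b}\sum_{m=0}^{b-2}\frac{(b-p)^m}{m!} \;+\; \Bigl(\frac{b-p}{b}\Bigr)^{\!b-1}\!\Pr[\mathrm{Poisson}(b)\ge b] \;\ge\; 0,
\]
so $\E[Z]$ is non-decreasing in $p$ along the facet, and since $p\ge 1$, the worst case is $(p,q)=(1,b-1)$. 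Let $h(b):=\E[Z]\big|_{p=1,\,q=b-1}$; it remains to verify $h(b)\ge (19-67/\sfe^3)/27$ for every integer $b\ge 1$.

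For the integer minimization, I would compute $h(b)$ for small $b$ by hand: $h(1)=1-1/\sfe\approx 0.632$, $h(2)=\tfrac34-\tfrac54\sfe^{-2}\approx 0.581$, $h(3)=(19-67\sfe^{-3})/27\approx 0.580$, and then $h(4)\approx 0.583$, $h(5)\approx 0.585$, $h(6)\approx 0.588$. For the tail $b\to\infty$, the identity $e^{-b}\sum_{n=0}^{b-1}\frac{(b-1)^n}{n!}=\sfe^{-1}\Pr[\mathrm{Poisson}(b-1)\le b-1]$, combined with $((b-1)/b)^b\uparrow 1/\sfe$ and $\Pr[\mathrm{Poisson}(b)\ge b]\to 1/2$ (CLT for Poisson sums), gives $h(b)\to 1-1/\sfe\approx 0.632$, comfortably above the target. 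The main obstacle I expect is closing the gap between the explicit small-$b$ numerics and the asymptotic tail: to cover every $b\ge 4$ uniformly, the cleanest route is probably to prove a short monotonicity lemma $h(b+1)\ge h(b)$ for $b\ge 3$ (which the numerics above support), after which $h(3)$ is the global minimum and Lemma~\ref{lem:cov-1} follows. Establishing this monotonicity (alternatively, a quantitative Poisson tail bound showing $\Pr[\mathrm{Poisson}(b)\ge b]\ge 1/2 - O(1/\sqrt{b})$ with an effective constant) is the only delicate step; everything else is routine.
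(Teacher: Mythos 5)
Your overall strategy coincides with the paper's: reduce to the worst case $(p,q)=(1,b-1)$, then handle a one-parameter family indexed by the integer $b$ via explicit evaluation for small $b$ and a Poisson concentration argument for large $b$; your worst value $(19-67\sfe^{-3})/27$ at $b=3$ is exactly the paper's $H(2)$. The execution differs in two ways worth noting. First, where the paper sums $\Pr[Z_t=1]$ over rounds and passes to the integral $H(q)=\int_0^1\sfe^{-\zeta}\Pr[\Pois(\zeta q)\le q]\,d\zeta$ with $q=b-1$, you compute $\Pr[Z=0]$ in closed form by Poisson thinning; the two representations agree (one can check that $1-H(b-1)$ equals your expression at $p=1$, $q=b-1$), but your form makes the optimization over $(p,q)$ transparent. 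Second, the paper merely asserts that the adversary will arrange $p=1$ and $q=b-1$ (``we can verify''), whereas your coupling in $q$ together with the sign of $\partial\E[Z]/\partial p$ on the facet $q=b-p$ actually proves this reduction --- a genuine gain in rigor, and both steps check out. The one step you leave open, covering all $b\ge 4$ uniformly, is precisely the step the paper handles semi-computationally: it verifies $H(q)$ numerically for $q\le 100$ and, for $q>100$, lower-bounds $H(q)$ via the Poisson tail bound $\Pr[\Pois(\zeta q)>q]\le\exp(-q(1-\zeta)^2/2)$. Of your two proposed routes I would take the quantitative tail bound rather than the unproven monotonicity $h(b+1)\ge h(b)$: since $(1-1/b)^b<\sfe^{-1}$, the median property of $\Pois(\lambda)$ at integer $\lambda$ combined with $\Pr[\Pois(\lambda)=\lambda]\le 1/\sqrt{2\pi\lambda}$ yields $\Pr[Z=0]\le\sfe^{-1}\big(1+2/\sqrt{2\pi(b-1)}\big)$, hence $h(b)\ge 1-1/\sfe-2\sfe^{-1}/\sqrt{2\pi(b-1)}$, which exceeds $0.580$ for all $b\ge 33$ and leaves only a finite check for $4\le b\le 32$. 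With that step written out, your argument is complete and slightly tighter than the paper's in its treatment of the worst-case $(p,q)$.
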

\begin{lemma}\label{lem:cov-2}
If $p < 1$, we have $\E[Z] \ge 0.580 \cdot p$.
\end{lemma}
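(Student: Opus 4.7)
The plan is to first take the limit $T \to \infty$, converting the Balls-and-Bins process into a two-color Poisson process on $[0,1]$: type-I and type-II balls arrive independently at rates $p$ and $q$, and the bin accepts balls in arrival order until either $b$ have been deposited or time $1$ is reached. Writing $\lambda = p+q$ and $\alpha = q/\lambda$, Poisson thinning (each arrival is independently type~I with probability $p/\lambda$) together with conditioning on the total count $N \sim \mathrm{Pois}(\lambda)$ yields the closed form
\[
\Pr[Z = 0] \;=\; e^{-\lambda} \sum_{n=0}^{b-1} \frac{q^n}{n!} \;+\; \alpha^b \cdot \Pr[\mathrm{Pois}(\lambda) \ge b].
\]

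Next, I will show by coupling that $\Pr[Z=1]$ is non-increasing in $q$ with $p$ and $b$ held fixed, since extra type-II arrivals only deplete capacity and can block future type-I arrivals. This reduces both lemmas to the worst case $q = b - p$; let $h_b(p) := \Pr[Z=1]$ under this reduction. A direct differentiation gives
\[
h_b'(p) \;=\; e^{-b} \sum_{m=0}^{b-2} \frac{(b-p)^m}{m!} \;+\; (1 - p/b)^{b-1}\bigl[1 - S(b)\bigr] \;\ge\; 0,
\]
where $S(b) := e^{-b}\sum_{n=0}^{b-1} b^n/n!$ is independent of $p$. Hence $h_b$ is non-decreasing on $[0,b]$, and for Lemma~\ref{lem:cov-1} the infimum over $p \ge 1$ equals $h_b(1)$. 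Minimizing $h_b(1)$ over integers $b \ge 1$ by direct computation yields $h_1(1) = 1 - 1/\sfe$, $h_2(1) = 3/4 - (5/4)\sfe^{-2}$, $h_3(1) = 19/27 - (67/27)\sfe^{-3} \approx 0.580$, and $h_4(1) \approx 0.583$; combined with $h_b(1) \to 1 - 1/\sfe \approx 0.632$ as $b \to \infty$ (via $(1-1/b)^b \to 1/\sfe$ and $\Pr[\mathrm{Pois}(b) \le b-1],\, \Pr[\mathrm{Pois}(b-1) \le b-1] \to 1/2$ by the central limit theorem), the infimum is attained at $b = 3$ and matches the claimed constant.

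For Lemma~\ref{lem:cov-2}, I will further observe that $h_b$ is concave on $[0,b]$: each $(b-p)^n$ with $0 \le n \le b-1$ is convex in $p$ on $[0,b]$, and $(1 - p/b)^b$ is convex in $p$ on $[0,b]$, so $1 - h_b(p)$ is a non-negative combination of convex functions, making $h_b$ concave. Since $h_b(0) = 0$ (the two subtracted terms evaluate at $p=0$ to $S(b)$ and $1 - S(b)$, summing to $1$), concavity gives $h_b(p) \ge p \cdot h_b(1)$ for every $p \in [0, 1]$. Chaining with $h_b(1) \ge 0.580$ from Lemma~\ref{lem:cov-1} closes the proof.

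The main obstacle I anticipate is the discrete minimization of $h_b(1)$ over $b \in \mathbb{Z}_{\ge 1}$: the small-$b$ cases are exact calculations and the limit $b \to \infty$ equals $1 - 1/\sfe > 0.580$, but cleanly ruling out the intermediate range requires either establishing discrete monotonicity $h_{b+1}(1) \ge h_b(1)$ for $b \ge 3$ by manipulating the Poisson tail identities in the closed form, or deriving an explicit quantitative estimate such as $h_b(1) \ge 1 - 1/\sfe - c/\sqrt{b}$ that exceeds $0.580$ beyond a threshold and then finishing with numerical verification for the finitely many remaining values.
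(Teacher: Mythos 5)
Your proposal is correct, and it reaches the bound by a genuinely different route at the key step. The paper also reduces to the worst case $q=b-p$ (so the total arrival rate is exactly $b$) and ultimately minimizes a one-variable function of the capacity numerically, exactly as you anticipate in your final paragraph; but it extracts the factor $p$ differently. It writes $\E[Z]$ as a sum over the round of the first accepted type-I ball, passes to the integral form $p\sum_{\ell=0}^{b-1}\frac{q^\ell}{\ell!}\int_0^1\zeta^\ell \sfe^{-b\zeta}\,d\zeta$, and then lower-bounds termwise by replacing $q=b-p>b-1$ with $b-1$ (legitimate because the exponential factor $\sfe^{-(p+q)\zeta}=\sfe^{-b\zeta}$ is unaffected by the substitution), landing on $p\cdot H(b-1)$ with the same function $H$ already analyzed in Lemma~\ref{lem:cov-1}. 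You instead compute $\Pr[Z=0]$ exactly in the Poisson limit by thinning, observe $h_b(0)=0$, and obtain $h_b(p)\ge p\,h_b(1)$ from concavity of $h_b$ in $p$. Your version buys an exact closed form (which correctly re-derives the paper's constant $\frac{1}{27}\big(19-67\sfe^{-3}\big)$ as $h_3(1)$), an actual coupling proof rather than an assertion that $q=b-p$ is the worst case, and a unified treatment of both lemmas through the single quantity $h_b(1)$; the paper's version avoids proving concavity and reuses its Lemma~\ref{lem:cov-1} machinery verbatim. The one obstacle you flag---controlling $h_b(1)$ for intermediate $b$---is real but is resolved in the paper at exactly the level of rigor you propose: a Mathematica check for $q=b-1\le 100$ plus a Poisson upper-tail bound giving $H(q)\ge H_L(q)\ge 0.582$ for $q>100$, so your plan closes without any missing idea.
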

\begin{proof}[Proof of Lemma~\ref{lem:cov-1}]
Suppose $p \ge 1$ and $q \le b-p \le b-1$. To minimize the probability that a ball of type I gets sampled, we can verify that the adversary will arrange $p=1$ and $q=b-1$. Let $A_{I,t}=1$ and $A_{II,t}=1$ indicate a ball of type I and a ball of type II get sampled at $t$, respectively. Let $Z_t=1$ indicate that a ball of type I gets sampled for the first time at time $t$. Suppose $\Bi(\cdot,\cdot)$ and $\Pois(\cdot)$ represent a random variable following a binomial and Poisson distributions, respectively.
\begingroup
\allowdisplaybreaks
\begin{align*}
\E[Z]&=\sum_{t=1}^T \E[Z_t]=\sum_{t=1}^T \E[A_{I,t}] \Pr[A_{I,t'}=0, \forall t'<t] 
 \cdot \Pr\bb{\sum_{t'<t}A_{II,t} \le b-1|  A_{I,t'}=0, \forall t'<t}\\
& =\sum_{t=1}^T \frac{1}{T}\bp{1-\frac{1}{T}}^{t-1}\Pr\bb{\Bi\bp{t-1, \frac{q}{T-1}} \le b-1}.
\end{align*}
\endgroup
The equality \bluee{in} the last line can be justified as follows. Assume $A_{I,t'}=0, \forall t'<t$, which means that a ball of type I never gets sampled before $t$. Conditioning on that, a ball of type II will get sampled during each round $t'<t$ with a probability $(q/T)/(1-p/T)=q/(T-1)$. Therefore, 
\begingroup
\allowdisplaybreaks
\begin{align*}
\E[Z] &\ge \sum_{t=b+1}^T \frac{1}{T}\bp{1-\frac{1}{T}}^{t-1}\Pr\bb{\Bi\bp{t-1, \frac{q}{T-1}} \le b-1}\\ 
&= \sum_{t=b}^{T-1} \frac{1}{T}\bp{1-\frac{1}{T}}^{t} \sum_{\ell=0}^{b-1}{ t \choose \ell } \Big(\frac{q}{T-1} \Big)^\ell \Big(1-\frac{q}{T-1} \Big)^{t-\ell} \\
&=\sum_{\ell=0}^{b-1} \frac{q^\ell}{\ell!}
\sum_{t=b}^{T-1} \frac{1}{T}\bp{1-\frac{1}{T}}^{t} \frac{t\cdot (t-1)\cdots (t-\ell+1)}{(T-1)^\ell}\Big(1-\frac{q}{T-1} \Big)^{t-\ell} \\
&=\sum_{\ell=0}^{b-1} \frac{q^\ell}{\ell!} \int_0^1  \zeta^\ell \sfe^{-(q+1)\zeta} d\zeta -O(1/T) =\int_{0}^1 d\zeta   \sum_{\ell=0}^{b-1}  \frac{(\zeta q)^\ell}{\ell!} \sfe^{-q\zeta} \sfe^{-\zeta} -O(1/T)\\
&=\int_{0}^1 d\zeta  \cdot  \sfe^{-\zeta} \Pr\Big[ \mathrm{Pois}(\zeta q) \le b-1 \Big]-O(1/T)\\
&=\int_{0}^1 d\zeta \cdot  \sfe^{-\zeta} \Pr\Big[ \mathrm{Pois}(\zeta q) \le q \Big]
-O(1/T)
\doteq H(q)-O(1/T).
\end{align*}
\endgroup
Recall that $T \gg 1$ and thus, we can ignore the term of $O(1/T)$. Note that $q=b-1$, which takes integer values only. When $q$ is small, we can use Mathematica to verify that 
\[
\min_{q \in \{0,1,\ldots,100 \}} H(q)=H(2)=\frac{1}{27}\Big(19-\frac{67}{\sfe^3}\Big)\sim 0.580.
\]
Now we try to lower bound $H(q)$ for large $q$ values. Applying the upper tail bound of a Poisson random variable due to the work of~\cite{pois-tail}, we have $\Pr[\Pois(\zeta q) >q] \le \exp \bp{ \frac{-q(1-\zeta)^2}{2}}$. 
Therefore,
\begin{align*}
H(q) & =\int_{0}^1 d\zeta \cdot  \sfe^{-\zeta} \Pr\Big[ \mathrm{Pois}(\zeta q) \le q \Big] \ge\int_{0}^1 d\zeta \cdot  \sfe^{-\zeta} \bP{1-\exp \bp{ \frac{-q(1-\zeta)^2}{2}}} \doteq H_L(q).
\end{align*}
We can verify that (i) $H_L(q)$ is an increasing function of $q$ when $q>0$; and (ii) $H_L(100) \ge 0.582$. Thus, we have $H(q) \ge H_L(q) \ge 0.582$ for all $q >100$. Finally, we conclude that $\E[Z] \ge H(q) \ge 0.580$ for all non-negative integer $q$. This establishes the final result.
\end{proof}

\begin{proof}[Proof of Lemma~\ref{lem:cov-2}]
Recall that $A_{I,t}$ and $A_{II,t}$ indicate a ball of type I and II gets sampled at $t$, respectively, and $Z_t=1$ \bluee{indicates} that a ball of type I gets sampled for the first time at time $t$. WLOG assume $q=b-p$. Thus, we have
\begingroup
\allowdisplaybreaks
\begin{align*}
\E[Z]&=\sum_{t=1}^T \E[Z_t]\\
&=\sum_{t=1}^T \Pr[A_{I,t}] \Pr[A_{I,t'}=0, \forall t'<t] \Pr\bb{\sum_{t'<t}A_{II,t} \le b-1|  A_{I,t'}=0, \forall t'<t}\\
& =\sum_{t=1}^T \frac{p}{T}\bp{1-\frac{p}{T}}^{t-1}\Pr\bb{\Bi\bp{t-1, \frac{q}{T-p}} \le b-1}\\
&\ge \sum_{t=b+1}^T \frac{p}{T}\bp{1-\frac{p}{T}}^{t-1} \sum_{\ell=0}^{b-1}{ t-1 \choose \ell } \Big(\frac{q}{T-p} \Big)^\ell \Big(1-\frac{q}{T-p} \Big)^{t-1-\ell} \\
&=p \cdot \sum_{\ell=0}^{b-1} \frac{q^\ell}{\ell!}
\sum_{t=b+1}^T \frac{1}{T}\bp{1-\frac{p}{T}}^{t-1} \frac{(t-1)\cdots (t-\ell)}{(T-p)^\ell}\Big(1-\frac{q}{T-p} \Big)^{t-1-\ell} \\
&=p \cdot \sum_{\ell=0}^{b-1} \frac{q^\ell}{\ell!} \int_0^1  \zeta^\ell \cdot \sfe^{-p \zeta} \cdot \sfe^{-q \zeta} d\zeta  ~~~\mbox{\bp{by taking $T \rightarrow \infty$ since $p,q,b \ll T$}}
\\
&\ge p \cdot \sum_{\ell=0}^{b-1} \frac{(b-1)^\ell}{\ell!} \int_0^1  \zeta^\ell \cdot \sfe^{-(b-1) \zeta} \cdot \sfe^{- \zeta} d\zeta  ~~~\bp{q=b-p>b-1, p+q=b}\\
&=p \cdot \int_{0}^1 d\zeta  \cdot \sum_{\ell=0}^{b-1}  \frac{( (b-1)\zeta)^\ell}{\ell!}\cdot \sfe^{-(b-1)\zeta} \cdot \sfe^{-\zeta}\\
&=p \cdot \int_{0}^1 d\zeta \cdot  \sfe^{-\zeta} \Pr\Big[ \mathrm{Pois}\bp{(b-1)\zeta} \le b-1 \Big] \doteq p \cdot H(b-1).
\end{align*}
\endgroup
Note that the function $H(\cdot)$ shares the same form as that in the proof of Lemma~\ref{lem:cov-1}. Thus, applying the same analysis as before, we claim that 
$\E[Z] \ge 0.580 \cdot p$ for any $0 \le p<1$.
\end{proof}

\section{Online Capacitated Submodular Maximization (\ons)} \label{sec:gen}
For the general case, let's recall that each task $i$ is associated with a general monotone submodular utility function $g_i$ over the \bluee{ground set} of ${\cN_i}$ (the set of all neighbors of $i$). Here are two remarks. First, under KIID, it is possible to have multiple online arrivals for some worker types. Therefore, in the general case, we need to define $g_i$ over all possible \emph{multisets} of $\cN_i$. However, we can skip this by introducing multiple copies of worker types with high arrival rates. This ensures that, with high probability, each worker type has at most one online arrival. As a result, we effectively treat multiple arrivals of a single worker type as the arrivals of multiple distinct worker types. Second, inspired by studies~\cite{DBLP:journals/corr/abs-2002-10697,Faez-17}, we assume that each task $i$ has a \emph{constant} matching capacity $b_i$. This assumption is motivated by practical gig economy platforms, where the capacity of each task $i$ (\ie the maximum number of workers that can be allocated to $i$) is typically small due to its finite number of copies and/or limited budget.

In the following, we present a configuration-LP based algorithm. Recall that for each task $i$ (worker $j$),  $\cN_i$ ($\cN_j$) is the set of neighbors  incident to $i$ ($j$) in the input graph $G=(I,J,E)$. Let $\Lam_i$ be the collection of all subsets of neighbors of $i$ with cardinality no larger than $b_i$, \ie $\Lam_i=\{S \subseteq \cN_i, |S| \le b_i\}$.  Since each task $i$ has a capacity $b_i$, we claim that any clairvoyant optimal (\OPT) will select at most one subset $S \in \Lam_i$ and assign it to task $i$ for every $i \in I$. Let $x_{i,S}$ be the probability that $S \in \Lam_i$ is assigned to $i$ in \OPT. For each $ e=(i,j) \in E$, let $\Lam_{ij}  =\{S \in \Lam_i: S \ni j \}$, which denotes the collection of subsets of neighbors of $i$ that include worker $j$ but have no more than $b_i$ workers in total. Consider the LP below.
\begingroup
\allowdisplaybreaks
\begin{align}
\max  & \sum_{i \in I} \sum_{S \in \Lam_i} g_i(S) \cdot x_{i,S}  &&   \label{LP:2}\\
& \sum_{S \in \Lam_i} x_{i,S} \le 1, &&\forall i \in I \label{cons:i2}\\
&\sum_{S \in \Lam_{ij}}x_{i, S} \le r_j, && \forall j\in J, (ij)\in E \label{cons:je2}\\
& \sum_{i\in \cN_j}\sum_{S \in \Lam_{ij}}x_{i, S} \le r_j \cdot b_j, &&\forall j \in J \label{cons:j2} \\
& 0 \le x_{i,S}, && \forall i \in I, S \in \Lam_i. \label{cons:S2}
\end{align}
\endgroup
Observe that \LP~\eqref{LP:2} can be constructed and solved within polynomial time since $\{b_i\}$ are assumed constants. This is the only place we need the constant-task-capacity assumption.

\begin{lemma}\label{lem:benchmark-2}
The optimal value of \LP~\eqref{LP:2} is a valid upper bound on the expected performance of a clairvoyant optimal (\OPT) for \ons.
\end{lemma}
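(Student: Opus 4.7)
The plan is to mirror the proof of Lemma~\ref{lem:benchmark} but at the level of subset-assignment variables. Let $\OPT$ denote a fixed clairvoyant optimal policy, and for each task $i$ let $S_i^*\subseteq \cN_i$ denote the (random) set of workers eventually assigned to $i$ by $\OPT$ over the $T$ rounds. Since $\OPT$ respects every task capacity, $|S_i^*|\le b_i$ almost surely, so $S_i^*\in\Lam_i$. I would then define
\[
x_{i,S}^{*}\;:=\;\Pr\bigl[\,S_i^{*}=S\,\bigr]\qquad\forall i\in I,\ S\in\Lam_i,
\]
where the probability is over both the KIID arrival sequence and any internal randomness of $\OPT$. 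The whole proof amounts to showing $\x^{*}=(x_{i,S}^{*})$ is feasible for \LP~\eqref{LP:2} and that the objective value of $\x^{*}$ equals $\E[\OPT]$.

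For the objective, linearity of expectation gives
\[
\sum_{i\in I}\sum_{S\in \Lam_i} g_i(S)\cdot x^{*}_{i,S}
=\sum_{i\in I}\E[g_i(S_i^{*})]
=\E\Bigl[\,\sum_{i\in I}g_i(S_i^{*})\Bigr]
=\E[\OPT].
\]
Constraint~\eqref{cons:i2} is immediate: for each $i$, the events $\{S_i^{*}=S\}_{S\in\Lam_i}$ are disjoint (and $\emptyset\in\Lam_i$), so $\sum_{S\in\Lam_i}x^{*}_{i,S}=1$.

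For Constraint~\eqref{cons:je2}, observe that $\sum_{S\in\Lam_{ij}}x^{*}_{i,S}=\Pr[j\in S_i^{*}]$, which cannot exceed the probability that $j$ arrives at least once over the $T$ rounds; under KIID this probability equals $1-(1-r_j/T)^T\le 1-\sfe^{-r_j}\le r_j$, exactly as in Lemma~\ref{lem:benchmark}. Constraint~\eqref{cons:j2} is the main technical point and where I expect the only real care is needed: the LHS equals $\E\bigl[\sum_{i\in\cN_j}\mathbb{1}[j\in S_i^{*}]\bigr]$, i.e., the expected number of distinct tasks to which $j$ is ever assigned. Let $N_j$ be the (random) number of arrivals of $j$; at each arrival, the per-arrival capacity restricts $\OPT$ to matching $j$ to at most $b_j$ neighbors, so the total number of $(i,\text{arrival})$-assignments of $j$ is at most $N_j\cdot b_j$, and the number of distinct tasks is bounded above by that same quantity. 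Taking expectation and using $\E[N_j]=r_j$ under KIID yields $\sum_{i\in\cN_j}\Pr[j\in S_i^{*}]\le r_j\cdot b_j$, which is Constraint~\eqref{cons:j2}. Non-negativity \eqref{cons:S2} is trivial, so $\x^{*}$ is feasible and the LP optimum dominates $\E[\OPT]$, completing the proof. The WLOG reduction in the preamble (splitting high-rate workers into copies) is not actually needed here, since the argument treats $S_i^{*}$ as a set and the bound $\Pr[j\in S_i^{*}]\le 1-\sfe^{-r_j}$ handles multiple arrivals of a single type correctly.
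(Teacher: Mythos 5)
Your proposal is correct and follows essentially the same route as the paper: interpret $x_{i,S}^{*}$ as the probability that \OPT assigns exactly $S$ to task $i$, check that the objective then equals $\E[\OPT]$, and verify Constraints~\eqref{cons:i2}--\eqref{cons:S2} by the same probabilistic arguments (disjointness of the events $\{S_i^{*}=S\}$, the marginal probability of an assignment being at most the probability of at least one arrival, and the expected number of assignments of $j$ being at most $r_j b_j$); your treatment of Constraint~\eqref{cons:j2} via distinct-versus-total assignments is in fact slightly more careful than the paper's one-line justification. The only blemish is the chain $1-(1-r_j/T)^T\le 1-\sfe^{-r_j}$, whose first inequality actually points the other way; this is harmless since $1-(1-r_j/T)^T\le r_j$ follows directly from a union bound (and the paper commits the same imprecision by treating the arrival probability as exactly $1-\sfe^{-r_j}$ in the $T\to\infty$ regime).
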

\begin{proof}
 Note that the objective function captures the expected utilities of a clairvoyant optimal (\OPT). It would suffice to show the feasibility of all constraints in \LP-\eqref{LP:2}. Constraint~\eqref{cons:i2} is valid since \OPT will select at most one subset  $S \in \Lam_i$ and assign it to task $i$ for each $i$. Constraint~\eqref{cons:je2}: The LHS represents the marginal probability that $j$ is assigned to $i$ in $\OPT$, which should be no more than the probability that $j$ arrives at least once that is equal to $1-\sfe^{-r_j} \le r_j$.  
 For Constraint~\eqref{cons:j2}: The LHS is equal to the expected number of times that $j$ is assigned; thus, it should be no larger than its online arrivals multiplied by its capacity, which is $r_j \cdot b_j$. 
\end{proof}
 
Let $\{x_{i,S}^*\}$ be an optimal solution to \LP-\eqref{LP:2}. For each edge $e=(i,j) \in E$, let $y^*_{e} \doteq \sum_{S \in \Lam_{ij}} x^*_{i,S}$, which can be interpreted as the marginal probability that worker $j$ is assigned to task $i$ in a clairvoyant optimal (\OPT). From the feasibility of $\{x_{i,S}^*\}$ to \LP-\eqref{LP:2}, we claim the following properties of $\{y_e^*\}$. Recall that $E_i$ ($E_j$) is the set of edges incident to $i$ ($j$). 
\begin{lemma}\label{eqn:y}
(1) $y^*_{e} \le r_j, \forall e=(ij) \in E$; (2) $\sum_{e \in E_j} y^*_{e} \le r_j \cdot b_j, ~\forall j \in J$; (3)$\sum_{e \in E_i} y_{e}^* \le b_i, \forall i \in I$.
\end{lemma}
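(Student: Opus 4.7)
The plan is to derive each of the three bounds directly from a corresponding constraint of \LP~\eqref{LP:2}, using only the definition $y^*_e = \sum_{S \in \Lam_{ij}} x^*_{i,S}$ for an edge $e=(i,j)$. Parts~(1) and~(2) will be essentially tautological, while part~(3) will need a one-line double-counting step.

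First I would dispatch part~(1) by observing that $y^*_{ij}$ is literally the left-hand side of Constraint~\eqref{cons:je2} evaluated at the edge $(i,j)$, so the bound $y^*_e \le r_j$ is just that constraint read off. Similarly, for part~(2) I would expand $\sum_{e \in E_j} y^*_e = \sum_{i \in \cN_j} \sum_{S \in \Lam_{ij}} x^*_{i,S}$ straight from the definition; this is exactly the left-hand side of Constraint~\eqref{cons:j2}, which yields the bound $r_j \cdot b_j$.

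The only mildly non-trivial step is part~(3). Here I would swap the order of summation. Starting from $\sum_{e \in E_i} y^*_e = \sum_{j \in \cN_i} \sum_{S \in \Lam_{ij}} x^*_{i,S}$ and noting that $j \in \cN_i$ contributes to the inner sum if and only if $j \in S$, I would rewrite the double sum as $\sum_{S \in \Lam_i} x^*_{i,S} \cdot |\{j \in \cN_i : j \in S\}|$. Because every $S \in \Lam_i$ is by definition a subset of $\cN_i$, this count is simply $|S|$, so the expression becomes $\sum_{S \in \Lam_i} |S| \cdot x^*_{i,S}$. Since $\Lam_i$ consists only of subsets with $|S| \le b_i$, each term is bounded by $b_i \cdot x^*_{i,S}$, and Constraint~\eqref{cons:i2} then gives $\sum_{e \in E_i} y^*_e \le b_i \cdot \sum_{S \in \Lam_i} x^*_{i,S} \le b_i$.

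I do not expect a genuine obstacle here; the whole argument is careful bookkeeping. The single point to watch is the swap in part~(3), where one must use that $S \subseteq \cN_i$ for every $S \in \Lam_i$ in order to conclude that the multiplicity of $S$ in the double sum is exactly $|S|$ (rather than something needing an extra estimate).
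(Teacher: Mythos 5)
Your proposal is correct and follows the paper's proof essentially verbatim: parts (1) and (2) are read off from Constraints~\eqref{cons:je2} and~\eqref{cons:j2}, and part (3) uses the same summation swap $\sum_{j \in \cN_i}\sum_{S \in \Lam_{ij}} x^*_{i,S} = \sum_{S \in \Lam_i}|S|\cdot x^*_{i,S} \le b_i\sum_{S \in \Lam_i} x^*_{i,S} \le b_i$ via Constraint~\eqref{cons:i2}. Nothing to add.
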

 \begin{proof}
Consider a given edge $e=(i,j)$. The first inequality is due to Constraint~\eqref{cons:je2}; the second is due to $\sum_{e \in E_j} y^*_{e}=\sum_{i \in \cN_j} \sum_{S \in \Lam_{ij}} x_{i,S}^* \le r_j \cdot b_j$; the last is valid since $\sum_{e \in E_i}y_e^*=\sum_{j \in \cN_i} y_{i,j}^* =\sum_{j \in \cN_i} \sum_{S \in \Lam_{ij}}x_{i,S}^*=\sum_{S \in \Lam_i} |S| \cdot x^*_{i,S} \le \sum_{S \in \Lam_i} b_i \cdot x^*_{i,S} \le b_i$.
\end{proof}
Based on the solution $\{y_{e}^*\}$, our sampling algorithm is formally stated as below. 
\vspace{-0.2in}
\begin{algorithm}[ht!] 
\begin{algorithmic}[1]
\DontPrintSemicolon
\STATE \textbf{Offline Phase}: \;
\STATE Solve LP~\eqref{LP:2} for an optimal solution $\{x_{i,S}^*\}$. For each $e=(i,j)$, let $y_e^* \doteq \sum_{S \in \Lam_{ij}}x_{i,S}^*$. \;
\STATE \textbf{Online Phase}: \;
\FOR{$t=1,2,\ldots,T$}
	\STATE Let an online worker (of type) $j$ arrive at time $t$. \;
		\STATE Apply dependent rounding (DR) in~\cite{gandhi2006dependent} to the vector 
$\{y^*_{e}/r_j| e\in E_j\}$, and let $\{ Y^*_{e}| e \in E_j\}$ be the random binary vector output. \label{alg:step-3a}\;
	\STATE Match the edge $e=(i,j) \in E_j$ if $Y^*_{e}=1$ and $i$'s capacity remains. \label{alg:step-3b}
\ENDFOR
\caption{Dependent-Rounding-Based Algorithm for \ons ($\nna$).} 
\label{alg3}
\end{algorithmic}
\end{algorithm}
\vspace{-0.2in}

\textbf{Remarks on $\nna$}. (1) in Step~\eqref{alg:step-3a} of $\nna$, the vector $\{y^*_{e}/r_j| e\in E_j\}$ has each entry $y_e^*/r_j \in [0,1]$ and the total sum $\sum_{e\in E_j} y_e^*/r_j \le b_j$, due to the first and second inequalities in Lemma~\ref{eqn:y}; and (2) in Step~\eqref{alg:step-3b} of $\nna$, we are guaranteed that the capacity of worker $j$ will never be violated since $\sum_{e\in E_j}Y^*_e \le \lceil \sum_{e\in E_j}y^*_e/r_j\rceil \le b_j$ thanks to (\prob) of DR.

  \subsection{Proof  of Theorem~\ref{thm:main-2}}
 \label{sec:proof_2}
\textbf{A Second Balls-and-Bins Model} (\bbsp): Consider a given task $i \in I$. Similar to the previous analysis, we treat task $i$ as a bin with capacity $b_i$, and each worker $j \in \cN_i$ as a ball (of type) $j$. \nna implies that during each round $t \in [T]$, a ball $j \in \cN_i$ will be sampled (or arrive) with probability $ (r_j/T) \cdot (y_{ij}^*/r_j)=y_{ij}^*/T$ (with replacement), and no ball will arrive with probability $1-\sum_{e \in E_i} y_{e}^*/T$. Also, the sampling process will stop when either a total of $b_i$ balls arrive (copies will be counted) or we reach the last round $t=T$.

Let $b_i=b$. By Lemma~\eqref{eqn:y}, we have  $\sum_{e \in E_i}  y_{e}^*\le b$. WLOG assume that $\sum_{e \in E_i}  y_{e}^*\doteq y_i^*=b$. We can verify that the Worst Scenario (WS) will arrive when $y_i^*=b$ since \bbsp will terminate faster if not the same, compared to the case $y_i^*<b$. Let $\cS$ denote the (random) set of balls that arrive  before termination. Thus, the expected utility obtained on task $i$ in \nna should be $\E[g_i(\cS)]$. Let $A=|\cS|$, which denotes the (random) number of arrivals of balls before  termination. Observe that $A=\min(b, \Pois(b))$, a truncated Poisson random variable with $\Pois(b)$ denoting a Poisson random variable of mean $b$. Let $\OPT_i \doteq \sum_{S \in \Lam_i} g_i(S) \cdot x_{i,S}^*$, which denotes the expected utility  on task $i$ in a clairvoyant optimal (\OPT). We aim to show that $\E[g_i(\cS)] \ge 0.436 \cdot \OPT_i$ for each $i \in I$, and thus, by linearity of expectation we establish the final competitive ratio (CR) of \nna.  Here are two key ingredients to prove Theorem~\ref{thm:main-2}. 
   \begin{theorem}\label{thm:alg2}
If $A=1$, we have  $\E[g_i(\cS) |A=1] \ge \frac{\OPT_i}{b}$. If $2 \le A=\ell \le b$, we have 
 \[
  \E[g_i(\cS) |A=\ell] \ge \OPT_i \cdot \big(1-\sfe^{-1+(1-\frac{1}{b})^\ell}\big). 
    \]
   \end{theorem}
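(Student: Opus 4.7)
The argument splits along the two cases flagged in the statement. First I would unpack the WLOG assumption $y_i^* = b$: the LP constraints $|S| \le b_i = b$ for every $S \in \Lam_i$ and $\sum_{S} x_{i,S}^* \le 1$, together with the identity $\sum_S |S|\,x_{i,S}^* = y_i^* = b$, force $|S| = b$ for every $S$ in the support of $x_{i,\cdot}^*$ and force $\sum_S x_{i,S}^* = 1$. Hence $\{x_{i,S}^*\}_S$ is a genuine probability distribution over size-$b$ subsets of $\cN_i$, and conditional on $A = \ell$ the arrivals $J_1,\ldots,J_\ell$ are i.i.d.\ from the law $p_j = y_{ij}^*/b = (1/b)\sum_{S \ni j} x_{i,S}^*$.

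\noindent\textbf{Case $A=1$.} For a single arrival I would directly expand $\E[g_i(\{J_1\})] = \sum_j p_j\,g_i(\{j\}) = (1/b)\sum_S x_{i,S}^* \sum_{j \in S} g_i(\{j\})$ by substituting $p_j$ and swapping the order of summation. The standard subadditivity consequence of monotone submodularity with $g_i(\emptyset) = 0$, namely $\sum_{j \in S} g_i(\{j\}) \ge g_i(S)$, then finishes the argument: $\E[g_i(\cS)\mid A=1] \ge (1/b)\sum_S x_{i,S}^*\,g_i(S) = \OPT_i/b$.

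\noindent\textbf{Case $2 \le \ell \le b$.} The plan is a one-step greedy-style recursion. Writing $\cS_k$ for the random set of the first $k$ arrivals and conditioning on $\cS_{k-1}$, I would obtain
\begin{equation*}
\E\bigl[g_i(\cS_k) - g_i(\cS_{k-1}) \mid \cS_{k-1}\bigr] = \frac{1}{b}\sum_S x_{i,S}^* \sum_{j \in S}\bigl[g_i(\cS_{k-1} \cup \{j\}) - g_i(\cS_{k-1})\bigr].
\end{equation*}
Applying the classical submodular inequality $\sum_{j\in S}[g_i(A\cup\{j\})-g_i(A)]\ge g_i(A\cup S)-g_i(A)$ and monotonicity $g_i(A\cup S) \ge g_i(S)$, and using $\sum_S x_{i,S}^* = 1$, collapses the right-hand side into $(\OPT_i - g_i(\cS_{k-1}))/b$. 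Taking unconditional expectations and writing $a_k := \E[g_i(\cS_k)]$ gives the first-order linear recurrence $a_k \ge (1 - 1/b)\,a_{k-1} + \OPT_i/b$ with $a_0 = 0$, whose closed form is $a_\ell \ge \OPT_i\cdot(1 - (1-1/b)^\ell)$. The elementary inequality $x \ge 1 - \sfe^{-x}$ for $x \ge 0$, applied at $x = 1 - (1-1/b)^\ell$, then recovers the claimed bound $\OPT_i\cdot(1 - \sfe^{-1+(1-1/b)^\ell})$.

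\noindent\textbf{Main obstacle.} The one subtle step is the WLOG reduction to $y_i^* = b$: this reduction is exactly what turns $\{x_{i,S}^*\}$ into a probability distribution of total mass one, and that identity is used crucially to pull $g_i(\cS_{k-1})$ out of the weighted sum when producing the clean recurrence. Beyond this, every step is a routine combination of the defining inequalities of monotone submodular functions with a standard linear difference equation, so no further obstacles are anticipated.
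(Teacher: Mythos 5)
Your proposal is correct, but it takes a genuinely different route from the paper. The paper's proof is machinery-heavy: it identifies $\cS_\ell$ with a union of $\ell$ i.i.d.\ single-element samples $\pi^{(\iota)}(\y^*/b)$, invokes a swap-rounding lemma (Lemma~\ref{lem:alg2-a}, following~\cite{chekuri2010dependent}) to pass to independent sampling $\Sig(\y^*/b)$, lower-bounds the marginal inclusion probabilities by $\kappa(\ell)\,y^*_{ij}$ with $\kappa(\ell)=1-(1-1/b)^\ell$, and then chains the multilinear-extension-versus-concave-closure bound $\sG_i(\kappa\,\y^*)\ge(1-\sfe^{-\kappa})g_i^{+}(\y^*)$ from~\cite{CRS-14} with $g_i^{+}(\y^*)\ge\OPT_i$; the $A=1$ case goes through $g_i^*\ge g_i^{+}$. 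Your argument replaces all of this with the classical Nemhauser--Wolsey--Fisher one-step recursion, conditioning on $\cS_{k-1}$ and using only the two defining inequalities of monotone submodularity plus $\sum_S x^*_{i,S}=1$ (which you correctly extract from the WLOG $y_i^*=b$ — in fact more carefully than the paper does). Your closed form $a_\ell\ge\OPT_i\,(1-(1-1/b)^\ell)$ is \emph{strictly stronger} than the stated bound, since $x\ge 1-\sfe^{-x}$ at $x=\kappa(\ell)$; the exponential form in the theorem is exactly what you recover after this last relaxation, and your recursion also subsumes the $A=1$ case as $\ell=1$. What each approach buys: the paper's route is modular and plugs into a general toolbox (correlation gap, contention resolution) that may extend to richer constraints, whereas yours is self-contained, elementary, and — if the tighter factor $1-(1-1/b)^\ell$ were propagated through Lemma~\ref{lem:alg2-d} — could slightly improve the final $0.436$ constant. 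The one claim you share with the paper and should state explicitly is that, conditional on $A=\ell$, the $\ell$ arrival types in \bbsp are i.i.d.\ with law $y^*_{ij}/b$ (this holds because the stopping rule depends only on the count of arrivals, not their types); with that in place, every step of your recursion is sound.
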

   
      \begin{lemma}\label{lem:alg2-d}
  Let $\phi (b,\ell) \doteq 1-\sfe^{-1+(1-\frac{1}{b})^\ell}$ and $\Phi(b)\doteq \frac{\Pr[\Pois(b)=1]}{b} + \sum_{2 \le \ell<b}\Pr[\Pois(b)=\ell] \cdot \phi(b,\ell)+  \Pr[\Pois(b) \ge b] \cdot \phi(b,b)$, where $\Pois(b)$ denotes a Poisson random variable with mean $b$. We have that $\Phi(b) \ge 0.436$ when $b \ge 2$.
   \end{lemma}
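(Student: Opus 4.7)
The plan is to mirror the two-step strategy used in the proof of Lemma~\ref{lem:cov-1}: direct numerical verification for small $b$, combined with a concentration-based analytic lower bound for large $b$. Numerical experimentation suggests that $\Phi(b)$ is essentially minimized near $b = 3$ with $\Phi(3) \approx 0.4367$, so the stated target $0.436$ is essentially tight and the small-$b$ regime cannot be sidestepped by any clean single-bound argument.

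For small $b$, specifically $b \in \{2, 3, \ldots, B_0\}$ with $B_0$ a sufficiently large constant (say $B_0 = 100$), I would evaluate $\Phi(b)$ term by term using the closed forms $\Pr[\Pois(b) = \ell] = b^{\ell} \sfe^{-b}/\ell!$ and $\phi(b,\ell) = 1 - \exp(-1 + (1-1/b)^{\ell})$, and then verify on a computer algebra system (as was done for the analogous $H(q)$ in Lemma~\ref{lem:cov-1}) that $\Phi(b) \ge 0.436$ for every $b$ in this finite range. As a sanity check, at $b = 2$ the middle sum is empty, giving $\Phi(2) = \sfe^{-2} + (1 - 3\sfe^{-2})(1 - \sfe^{-3/4}) \approx 0.449$.

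For large $b$ (i.e., $b > B_0$), I would exploit the concentration of $\Pois(b)$ around its mean. A direct calculation shows that $\phi(b,\ell)$ is increasing in $\ell$ (since $(1-1/b)^{\ell}$ is decreasing in $\ell$ when $b > 1$), so I can lower bound the second and third summands of $\Phi(b)$ jointly by
\[
  \phi(b, \ell_0) \cdot \Pr[\Pois(b) \ge \ell_0]
\]
for any integer threshold $\ell_0$ with $2 \le \ell_0 \le b$. Choosing $\ell_0 = b - c\lceil\sqrt{b}\rceil$ for a moderate constant $c$ (e.g., $c = 3$) and applying a lower-tail Chernoff bound for $\Pois(b)$ (analogous in spirit to the upper-tail bound from~\cite{pois-tail} used in Lemma~\ref{lem:cov-1}), I obtain $\Pr[\Pois(b) \ge \ell_0] \ge 1 - \sfe^{-c^{2}/2} + o(1)$. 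Meanwhile $(1-1/b)^{\ell_0} \to 1/\sfe$ as $b \to \infty$, so $\phi(b,\ell_0) \to 1 - \sfe^{-1 + 1/\sfe} \approx 0.469$, and the product approaches $(1 - \sfe^{-c^{2}/2})\bigl(1 - \sfe^{-1 + 1/\sfe}\bigr)$, which for $c = 3$ is about $0.464$, comfortably above $0.436$.

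The hard part will be calibrating $B_0$ so that the $O(1/\sqrt{b})$ corrections in the large-$b$ bound, namely the deviation of $\phi(b,\ell_0)$ from $1 - \sfe^{-1 + 1/\sfe}$ and the non-asymptotic slack in the Poisson tail estimate, are provably small enough to preserve the gap between $0.464$ and $0.436$. This requires explicit, non-asymptotic error bounds rather than purely limiting statements, and $B_0$ must be chosen to satisfy both error controls simultaneously. Because the target bound $0.436$ is essentially the minimum value of $\Phi$, neither piece has much slack, and the numerical-to-analytic handoff at $b = B_0$ is the most delicate step of the proof.
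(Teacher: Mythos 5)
Your two-regime strategy (finite numerical verification plus an analytic lower bound for large $b$) is exactly the paper's strategy, but your analytic bound for the tail is genuinely different and arguably cleaner. The paper splits the middle sum as $H_1-H_2$, controls $H_2$ via $\sfe^x\le 1+x+x^2/2$ on $[-1,0]$, and then invokes the facts that the median of $\Pois(b)$ is $b$ and that the median and mode of $\Pois(b(1-1/b)^2)$ are both $b-2$, together with Stirling, arriving at an increasing function $\tau(b)$ with $\tau(\infty)=\tfrac12(1-\sfe^{-1+1/\sfe})+\tfrac14(1-\sfe^{-2})\approx 0.450$ and $\tau(1000)=0.436$; the finite check therefore runs to $b=1000$. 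Your bound instead uses the monotonicity of $\phi(b,\cdot)$ in $\ell$ to collapse the last two summands into $\phi(b,\ell_0)\Pr[\Pois(b)\ge\ell_0]$ and a single lower-tail Poisson estimate; its limiting value $(1-\sfe^{-c^2/2})(1-\sfe^{-1+1/\sfe})\approx 0.464$ for $c=3$ has more asymptotic headroom than the paper's $0.450$, at the cost of an $O(c/\sqrt b)$ error of comparable size. Both routes are valid; yours avoids the median/mode bookkeeping entirely.

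The one concrete problem is your tentative handoff point $B_0=100$: it does not work with your own bound. At $b=100$, $c=3$ one has $\ell_0=70$, $(1-1/100)^{70}\approx 0.495$, hence $\phi(100,70)=1-\sfe^{-0.505}\approx 0.397$, and multiplying by $1-\sfe^{-4.5}\approx 0.989$ gives roughly $0.392$, well below $0.436$. Working through the first-order correction $\phi(b,\ell_0)\approx (1-\sfe^{-1+1/\sfe})-0.2\,c/\sqrt b$, the product crosses $0.436$ only around $b\approx 500$ (and no choice of $c$ avoids this: small $c$ kills the tail probability, large $c$ kills $\phi(b,\ell_0)$). Since $\Phi(4)\approx 0.4365$ leaves essentially no slack, you must either push the computer verification out to $b\approx 500$ or beyond (the paper goes to $1000$), or sharpen the tail argument; as you anticipated, this calibration is the delicate step, but the fix is only a longer finite check, not a new idea.
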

 In the following, we show how the two results above lead to Theorem~\ref{thm:main-2}. We defer the proof of Theorem~\ref{thm:alg2} to Appendix~\ref{sec:alg2} and the proof of Lemma~\ref{lem:alg2-d} afterwards.
   \begin{proof}[Proof of Theorem~\ref{thm:main-2}]
   For the case $b=1$, we see that $A \le 1$, and $\Pr[A=1]=\Pr[\Pois(1) \ge 1]=1-1/\sfe$. Therefore, 
   \[
   \E[g_i(S) ]=\E[g_i(S) |A=1] \Pr[A=1] \ge \OPT_i \cdot (1-1/\sfe).
   \]
  The inequality above is partially due to Theorem~\ref{thm:alg2} when $A=1$. Now consider a general case $b \ge 2$. Recall that $\phi(b,\ell) \doteq 1-\sfe^{-1+(1-\frac{1}{b})^\ell}$. From Theorem~\ref{thm:alg2}, we have that 
   \begin{align*}
   \E[g_i(\cS)] &=\E[g_i(S) |A=1] \Pr[A=1]+\sum_{2\le \ell \le b}\E[g_i(S) |A=\ell] \Pr[A=\ell] \\
   & \ge  \frac{\OPT_i}{b} \cdot \Pr[\Pois(b)=1]
   + \OPT_i \cdot \bp{\sum_{ \ell=2}^{b-1}\phi(b,\ell)\cdot \Pr[\Pois(b)=\ell]+ \phi(b,b)\cdot \Pr[\Pois(b) \ge b]} \\
   &=\OPT_i \Big(\frac{\Pr[\Pois(b)=1]}{b}    + \sum_{\ell=2}^{b-1}\Pr[\Pois(b)=\ell] \cdot \phi(b,\ell)+  \Pr[\Pois(b) \ge b] \cdot \phi(b,b)\Big)\\
   & \doteq \OPT_i \cdot \Phi(b) \ge  \OPT_i \cdot 0.436.  ~~\mbox{\sbp{due to Lemma~\ref{lem:alg2-d}}}
   \end{align*}
     \end{proof}

   \begin{proof}[Proof of Lemma~\ref{lem:alg2-d}]
Observe that for a Poisson random variable $\Pois(b)$, its median number is $b$ for any integral value of $b$.\footnote{https://math.stackexchange.com/questions/455054/poisson-distribution-and-median.} Thus, we have that for the third part of $\Phi(b)$,
\[ \Pr[\Pois(b) \ge b] \cdot \phi(b,b) \ge \frac{1}{2} \bp{1-\sfe^{-1+1/\sfe}}\ge 0.234.
\]

The second part of $\Phi(b)$ can be rewritten as $H_1-H_2$, where 
\[H_1= \sum_{\ell=2}^{b-1} \Pr[\Pois(b)=\ell] \ge \frac{1}{2}-\sfe^{-b}b-\sfe^{-b}b^b/b!\ge  \frac{1}{2}-\frac{b}{\sfe^b}-\frac{1}{\sqrt{2\pi b}},
 \] 
 and
 \begingroup
\allowdisplaybreaks
 \begin{align}
 H_2& =\sum_{\ell=2}^{b-1} \Pr[\Pois(b)=\ell] \exp\big(-1+(1-1/b)^\ell\big) \nonumber\\
 & \le  \sum_{\ell=2}^{b-1} \Pr[\Pois(b)=\ell] \bp{(1-1/b)^\ell+\frac{1}{2}\big(-1+(1-1/b)^\ell\big)^2} \label{ineq:G-1}\\
 & = \frac{1}{2}\sum_{\ell=2}^{b-1} \Pr[\Pois(b)=\ell]+\frac{1}{2}\sum_{\ell=2}^{b-1} \Pr[\Pois(b)=\ell] (1-1/b)^{2\ell} \nonumber \\
 & \le  \frac{1}{4}+\frac{1}{2}\sum_{\ell=2}^{b-1} \frac{\sfe^{-b}\cdot b^\ell}{\ell!}(1-1/b)^{2\ell} \label{ineq:G-2}
 \\
 & = \frac{1}{4}+\frac{1}{2}\sum_{\ell=2}^{b-1}  \big(b \cdot(1-1/b)^2\big)^\ell \sfe^{-b \cdot(1-1/b)^2}\cdot \frac{1}{\ell!} \cdot \sfe^{b \cdot(1-1/b)^2-b} \nonumber \\
 &\le  \frac{1}{4}+\frac{1}{2} \bp{\frac{1}{2}+\Pr[\Pois(b \cdot(1-1/b)^2)=b-2]+\Pr[\Pois(b \cdot(1-1/b)^2)=b-1]} \cdot\sfe^{-2+1/b} \nonumber\\
 & \le \frac{1}{4}+\frac{\sfe^{-2+1/b}}{4}\bp{1+4\Pr[\Pois(b \cdot(1-1/b)^2)=b-2]}\nonumber\\
 & \le \frac{1}{4}+\frac{\sfe^{-2+1/b}}{4}+ \frac{\sfe^{-2+1/b}}{\sqrt{2\pi (b-2)}}.\nonumber
  \end{align}
  \endgroup
Inequality~\eqref{ineq:G-1} is due to the fact that $\sfe^x \le 1+x+x^2/2$ for all $x \in [-1,0]$. Inequality~\eqref{ineq:G-2} is due to the fact that $b$ is the median number of $\Pois(b)$ for integral $b$. Note that for $\Pois(b \cdot(1-1/b)^2)$, its median number and its mode are both  $b-2$, where the mode number is the value such that its PDF gets the maximum value. Thus, we claim that
\[
\Phi(b) \ge \frac{1}{2} \bp{1-\sfe^{-1+1/\sfe}}+H_1-H_2 \ge \frac{1}{2} \bp{1-\sfe^{-1+1/\sfe}}+ \frac{1}{4}\bp{1-\sfe^{-2+1/b}}-\frac{1+\sfe^{-2+1/b}}{\sqrt{2\pi (b-2)}}-\frac{b}{\sfe^b} \doteq \tau(b).
\]
Observe that the rightmost expression $\tau(b)$ is an increasing function of $b$ when $b \ge 3$. We can verify that (1) when $2 \le b \le 1000$, $\Phi(b)$ takes its minimum value of $0.436$ at $b=4$; (2) When $b \ge 1000$, $\Phi(b) \ge \tau(b)\ge \tau(1000)=0.436$.  Thus, we establish the claim that $\Phi(b) \ge 0.436$ when $b \ge 2$. 
   \end{proof}

\section{Conclusions and Future Work}\label{sec:con}	
In this paper, we have proposed three generic models of capacitated submodular maximization inspired by practical gig platforms. Our models feature the association of each task with either a coverage or a general monotone submodular utility function. We have  presented specific LP-based rounding algorithms for each of the three models and conducted  related approximation-ratio or competitive-ratio analysis. In the following, we discuss a few potential future directions. First, we can explore generalizing the capacity constraint to something more general, such as a matroid. Second, it would be interesting to refine the upper bounds (or establish hardness results) for the benchmark LPs in \bluee{online} settings. Can we narrow or close the gap between the upper and lower bounds in terms of the competitive ratio, similar to what has been achieved in the offline setting?

 \clearpage	
\begin{small}
\bibliographystyle{unsrtnat}
\bibliography{stable_ref}
\end{small}

\appendix
\onecolumn
{\Large \redd{ \textbf{Appendix}}}
\vspace{0.1in}
\section{Further Comments on the Differences of Models Proposed in This Paper from Existing Ones}\label{sec:app-a}

\subsection{Comparing \off and Submodular Welfare Maximization (SWM) in~\citet{vondrak2008optimal}}
It is tempting to cast \off as a special case of SWM as follows. First, create $b_j$ copies for each worker $j \in J$. For \bluee{ease} of notation, we still use $J$ to denote the resulting set of workers that may include copies of workers.  Second, for each task $i$, define an uncapacitated utility function as follows:
\begin{align}\label{eqn:app-a}
\tilde{g}_i(S)=\max_{S': S' \subseteq S \cap \cN_i, |S'| \le b_i} g_i(S'), ~~\forall S \subseteq J.
\end{align}
In this way, the original problem of \off can be restated as to find a partition $\{S_i\}$ of $J$ such that $\sum_i \tilde{g}_i(S_i)$ is maximized, where each $\tilde{g}_i$ is an uncapacitated monotone submodular function over $J$. 

The above reduction suffers the issue that we can only get a $(1-1/\sfe)^2$-approximate algorithm. Suppose by applying the classical algorithm in~\citet{vondrak2008optimal}, we get a partition of $J$, say $\cS=\{S_i\}$, such that
\[
\sum_{ i \in I}  \tilde{g}_i(S_i) \ge (1-1/\sfe) \cdot \OPT,
\]
where $\OPT$ denotes the utility by an optimal. Note that $\cS=\{S_i\}$ is not directly feasible to \off: We need to solve another knapsack-constrained monotone-submodular-maximization problem to retrieve $S'$ from $S_i$ for each $i \in I$ according to the definition of $\tilde{g}_i$~\eqref{eqn:app-a}, which incurs an extra approximate factor of $1-1/\sfe$. Thus, we claim that the reduction above yields a $(1-1/\sfe)^2$-approximate algorithm, which is much worse than what is presented in the paper.

\subsection{Comparing \on and Online Submodular Welfare Maximization by~\citet{kapralov2013online}}

The same reduction, as shown above, by introducing an uncapacitated utility function for each task $i \in I$, as defined in~\eqref{eqn:app-a}, is not applicable in the online setting. Suppose we apply~\gre, as proposed by~\citet{kapralov2013online}, to the reduced online SWM, and let $\cS$ be a random set assigned to task $i$ by \gre with $\tilde{g}_i$ defined in~\eqref{eqn:app-a}. It is important to note that the model \on proposed here assumes an online setting \emph{without free disposal}. This implies that for any $\cS=S$, we can only retrieve the first $b_i$ items from $S$. In other words, we do not have access to any items beyond that limit due to the matching capacity of $b_i$.

   \section{Proof of Theorem~\ref{thm:alg2}}
 \label{sec:alg2}
 Here are several  lemmas we need in the proof of  Theorem~\ref{thm:alg2}. WLOG assume $\cN_i=J=[n]\doteq\{1,2,\ldots,n\}$. 
For any vector $\x=(x_j) \in [0,1]^n$ with $\sum_{j \in J} x_j=1$, let  $\pi(\x)  \subseteq J$ be a random set formed by sampling \emph{one single} element from $J$ following the distribution $\x$, while  $\sig(\x)  \subseteq J$ be a random set formed by \emph{independently} sampling each element $j \in J$ with probability $x_j$. Note that $\pi(\x)$ with probability one will have one single element, while $\sig(\x)$ might have multiple elements, though the two share the same marginal distribution. Let $\{\pi^{(\iota)}(\x) | 1\le \iota \le \ell\}$ and $\{\sig^{(\iota)}(\x) | 1\le \iota \le \ell\}$ be $\ell$ \iid copies of $\pi(\x)$ and $\sig(\x)$, respectively. Consider a given non-negative monotone submodular set function $g$ over $J$ and a given integer $\ell$. A key lemma is stated as follows.

\begin{lemma}\label{lem:alg2-a}
\begin{align}
\E\bb{g\bp{\bigcup_{\iota=1}^\ell \pi^{(\iota)}(\x)}} &\ge \E\bb{g\bp{\bigcup_{\iota=1}^\ell \sig^{(\iota)}(\x)}}\label{ineq:alg2-a}.
\end{align}
\end{lemma}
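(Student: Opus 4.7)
The plan is to establish \eqref{ineq:alg2-a} via a hybrid argument that replaces the independent-sampling copies $\sig^{(\iota)}$ with the single-element copies $\pi^{(\iota)}$ one at a time, invoking a single-copy comparison at each swap. The key one-copy lemma I would isolate is: for every monotone submodular $h$ on $J$ (not assumed to satisfy $h(\emptyset)=0$), $\E[h(\pi(\x))] \ge \E[h(\sig(\x))]$. The intuition is that both schemes give the same marginal $\Pr[j \in \cdot]=x_j$, and the normalization $\sum_j x_j = 1$ forces $\sig(\x)$ to have mean size $1$, matching $\pi(\x)$; but $\sig(\x)$ spreads mass over sets of varying sizes, and submodularity penalizes the larger sets.

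To prove the one-copy lemma I would telescope $h(S)$ along an arbitrary enumeration $S=\{j_1,\dots,j_k\}$ and dominate each marginal gain by the corresponding singleton gain using submodularity, obtaining
\[
h(S) \;\le\; h(\emptyset) + \sum_{j\in S}\bigl(h(\{j\})-h(\emptyset)\bigr) \;=\; (1-|S|)\,h(\emptyset) + \sum_{j\in S} h(\{j\}).
\]
Taking expectation with $S=\sig(\x)$ and using $\E[|\sig(\x)|]=\sum_j x_j = 1$ collapses the first term and gives
\[
\E[h(\sig(\x))] \;\le\; \sum_{j\in J} x_j\, h(\{j\}) \;=\; \E[h(\pi(\x))],
\]
since $\pi(\x)$ is a singleton whose unique element is $j$ with probability $x_j$.

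For the hybrid step, define $U_k \doteq \bigcup_{\iota \le k}\pi^{(\iota)}(\x) \;\cup\; \bigcup_{\iota > k}\sig^{(\iota)}(\x)$, so that $U_0$ and $U_\ell$ are the two sides of \eqref{ineq:alg2-a}. Fix $k$, let $T$ denote the union of the $\ell-1$ copies other than the $k$-th, condition on $T=T_0$, and define $h_{T_0}(S)\doteq g(S \cup T_0)$. A routine check shows $h_{T_0}$ inherits monotonicity and submodularity from $g$: the marginal $g((S\cup\{j\})\cup T_0)-g(S\cup T_0)$ is non-increasing in $S$ by submodularity of $g$. Applying the one-copy lemma to $h_{T_0}$ for the $k$-th copy yields $\E[g(U_k)\mid T]\ge \E[g(U_{k-1})\mid T]$; taking expectation over $T$ and telescoping over $k=1,\dots,\ell$ delivers \eqref{ineq:alg2-a}.

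The main obstacle I anticipate is handling the boundary term $h(\emptyset)$ in the telescoping bound. A naive subadditive bound $h(S)\le\sum_{j\in S}h(\{j\})$ is valid only when $h(\emptyset)=0$, and fails inside the hybrid argument because $h_{T_0}(\emptyset)=g(T_0)$ is generally nonzero and can be as large as the whole objective. The hypothesis $\sum_j x_j = 1$ is precisely what makes the coefficient $\bigl(1-\E[|\sig(\x)|]\bigr)$ of $h(\emptyset)$ vanish, so this normalization is not cosmetic but essential: dropping it breaks the argument and, indeed, invalidates the conclusion.
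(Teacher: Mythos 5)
Your proof is correct, but it takes a genuinely different route from the paper's. The paper lifts $g$ to a submodular function on a ground set consisting of $\ell$ disjoint copies of $J$, observes that drawing one element per copy is an instance of randomized swap rounding in the associated partition matroid, and then chains together the swap-rounding guarantee of Chekuri, Vondr\'{a}k, and Zenklusen (expected value after rounding dominates the multilinear extension) with the identification of that multilinear extension with $\E\big[g\big(\bigcup_{\iota}\sig^{(\iota)}(\x)\big)\big]$. Your hybrid argument replaces this machinery with an elementary one-copy comparison: the pointwise bound $h(S)\le (1-|S|)\,h(\emptyset)+\sum_{j\in S}h(\{j\})$ from decreasing marginals, combined with $\E[|\sig(\x)|]=\sum_j x_j=1$, gives $\E[h(\sig(\x))]\le\sum_j x_j\, h(\{j\})=\E[h(\pi(\x))]$ for \emph{any} submodular $h$, and conditioning on the other $\ell-1$ copies (legitimate by independence, since $h_{T_0}(S)=g(S\cup T_0)$ inherits submodularity) lets you swap one copy at a time and telescope. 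The one non-routine point --- that the one-copy lemma must be proved without assuming $h(\emptyset)=0$, because inside the hybrid $h_{T_0}(\emptyset)=g(T_0)$ is large, and that the normalization $\sum_j x_j=1$ is exactly what annihilates the $h(\emptyset)$ coefficient in expectation --- is identified and handled correctly. What the paper's route buys is a ready-made connection to the swap-rounding/multilinear-extension toolkit that the surrounding analysis of Theorem~\ref{thm:alg2} uses anyway; what yours buys is a fully self-contained proof needing nothing beyond the definition of submodularity and linearity of expectation, and one that makes explicit precisely where the hypothesis $\sum_j x_j=1$ enters.
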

The proof of the above lemma mainly exploits the idea of \emph{swap rounding} as introduced in \cite{chekuri2010dependent}.

\begin{proof}
 For each item $j \in J$, we create $\ell$ copies and let the final set be $J'$ which is a multiset. Thus, each $S \subseteq J'$ can be viewed as a multiset of $J$ with each item having at most $\ell$ copies. For any $S \subseteq J'$, define $g'(S)=g(\sigma(S))$, where $\sig(S)$ is the set of distinct items of $J$ included in $S$. We can verify that $g'$ is submodular function over the ground set $J'$. Observe that $J'$ can be viewed as a union of $\ell$ copies of the set $J$, say $J^{1}, \ldots, J^\ell$. Consider such as a sampling process as follows: For each round $\iota \in [\ell]\doteq\{1,2,\ldots,\ell\}$, we sample an item $j$ from $J^{\iota}$ with probability $x_j$ and let $\cS$ be the random set of all items sampled (copies counted) and $\Y \in \{0,1\}^{n\ell}$ be the characteristic vector of $\cS$. Let $\y \in [0,1]^{n \ell}$ be the concatenation of $\ell$ identical copies of $\x$. Observe that (1) $\E[g'(\Y)]=\E[g'(\cS)]=\E[g(\cup_\iota \pi^{(\iota)}(\x))]$, according to the definition of $g'$ and $\pi^{(\iota)}(\x)$. (2) $\E[g'(\Y)] \ge \sG'(\y)$, where $\sG'$ is the multilinear relaxation of $g'$. Here we try to interpret $\Y$ alternatively as follows.  During each round $\iota \in [\ell]$, we sample an item $j$ from $J^\iota$ with probability $x_j$. This sampling process can be re-interpreted as conducting a series of randomized swap rounding to $\y$ as shown in Lemma VI.2. on page 9 of~\cite{chekuri2010dependent}.  (3) $G'(\y)=\E[g'(\widehat{\cS})]=\E[g(\cup_{\iota} \sig^{(\iota)}(\x))]$, where $\widehat{\cS}$ refers to the random set by independently sampling each element in $J'$ with probability $x_j$. Thus, we are done.
 \end{proof}

The second key lemma involves concepts of multilinear extension and concave closure introduced by~\cite{IPCO}. For any given non-negative monotone submodular set function $g$ over $J$, its multilinear extension, denoted by $\sG$, is defined as $\sG(\x)=\E[g(\sig(\x))]$ for any $\x \in [0,1]^n$. Recall that $\sig(\x)\subseteq J$ denotes a random set formed by independently sampling each element $j \in J$ with probability $x_j$ (note that $\sum_i x_i$ may not necessarily be $1$ here). The concave closure of $g$, denoted by $g^{+}$, is defined as $g^{+}(\x)=\max_{\bfD \in \cD(\x)} \E_{S \sim \bfD} [g(S)]$, where $\cD(\x)$ refers to the collection of all possible distributions over $2^J$ with the marginal distribution equal to $\x$.

\begin{lemma}[Lemma 4.9 on page 23 of~\cite{CRS-14}]\label{lem:alg2-b}
\begin{align}
\sG(b \cdot \x) \ge \big(1-\sfe^{-b}\big) g^{+}(\x), \forall \x \in [0,1]^n, b \in [0,1]
 \label{ineq:alg2-b}.
\end{align}
\end{lemma}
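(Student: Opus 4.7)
The plan is a continuous-greedy / ODE argument along the straight-line trajectory $\y(t)=t\x$ for $t\in[0,b]$: derive a first-order linear differential inequality for $F(t)\doteq \sG(t\x)$ and then integrate. Since $\x\in[0,1]^n$ and $b\in[0,1]$, the entire path stays in $[0,1]^n$, so every $\sG$-evaluation is legitimate and no scaling tricks are needed.

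The first step is the pointwise submodular inequality: for any $R,T\subseteq J$,
\[
\sum_{j\in T}\bigl(g(R\cup\{j\})-g(R\setminus\{j\})\bigr)\;\ge\; g(R\cup T)-g(R)\;\ge\; g(T)-g(R).
\]
I would split the left-hand sum over $T\setminus R$ and $T\cap R$: the $T\setminus R$ part telescopes into $g(R\cup T)-g(R)$ via diminishing returns (each $j\in T\setminus R$ has a marginal gain relative to $R$ that is at least its gain when appended to a larger chain already containing the previous $j$'s), and the $T\cap R$ part is non-negative by monotonicity. The second inequality is just monotonicity, via $g(R\cup T)\ge g(T)$.

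Taking expectations over the product random set $R=R(\y)$ and using the identity $\partial_j\sG(\y)=\E[g(R(\y)\cup\{j\})-g(R(\y)\setminus\{j\})]$ yields $\sum_{j\in T}\partial_j\sG(\y)\ge g(T)-\sG(\y)$. Now let $\{\alpha_T\}_{T\subseteq J}$ be an optimal distribution certifying $g^+(\x)=\sum_T \alpha_T\,g(T)$ with marginals $\sum_{T\ni j}\alpha_T=x_j$; averaging the previous inequality against $\alpha_T$ produces the master differential inequality
\[
\sum_{j}x_j\,\partial_j\sG(\y)\;=\;\sum_T\alpha_T\sum_{j\in T}\partial_j\sG(\y)\;\ge\; g^+(\x)-\sG(\y),\qquad \forall\,\y\in[0,1]^n.
\]

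Setting $\y=t\x$, I get $F'(t)=\sum_j x_j\,\partial_j\sG(t\x)\ge g^+(\x)-F(t)$ with $F(0)=g(\emptyset)=0$. Multiplying by the integrating factor $\sfe^{t}$ converts this into $\bigl(\sfe^{t}F(t)\bigr)'\ge \sfe^{t}g^+(\x)$; integrating from $0$ to $b$ and dividing through by $\sfe^{b}$ gives
\[
\sG(b\x)\;=\;F(b)\;\ge\;(1-\sfe^{-b})\,g^+(\x),
\]
which is exactly the claim. The main obstacle is the bookkeeping in the pointwise submodular inequality --- correctly separating $T\setminus R$ from $T\cap R$ and invoking diminishing returns along the right telescoping chain. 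Once that is done, passing to multilinear derivatives, averaging against $\{\alpha_T\}$, and integrating the resulting first-order linear ODE are all routine. This is essentially the Calinescu--Chekuri--P\'al--Vondr\'ak continuous-greedy analysis specialized to the unconstrained setting, where the trajectory is the straight line $t\x$ rather than a matroid-constrained curve, and it avoids any need to compare correlated-vs.-product distributions with matching marginals.
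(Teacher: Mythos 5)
The paper does not prove this lemma at all---it is imported verbatim by citation from~\cite{CRS-14}---so there is no in-paper argument to compare against; what you have written is a self-contained proof, and it is correct. Your pointwise inequality $\sum_{j\in T}\bigl(g(R\cup\{j\})-g(R\setminus\{j\})\bigr)\ge g(R\cup T)-g(R)\ge g(T)-g(R)$ is the standard one: the $T\setminus R$ terms dominate the telescoping sum by diminishing returns, the $T\cap R$ terms are nonnegative by monotonicity, and the second step is monotonicity again. Passing to expectations with $\partial_j \sG(\y)=\E[g(R(\y)\cup\{j\})-g(R(\y)\setminus\{j\})]$ (well defined since the bracket does not depend on whether $j\in R(\y)$), averaging against an optimal distribution $\{\alpha_T\}$ certifying $g^+(\x)$ (using $\sum_{T\ni j}\alpha_T=x_j$ and $\sum_T\alpha_T=1$), and integrating the resulting linear differential inequality for $F(t)=\sG(t\x)$ with the factor $\sfe^{t}$ all go through; $F$ is a polynomial in $t$, the path $t\x$ stays in $[0,1]^n$ because $b\le 1$, and $F(0)=g(\emptyset)=0$ holds under the paper's normalization (and even without it, $F(0)\ge 0$ only helps). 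This is essentially the Calinescu--Chekuri--P\'al--Vondr\'ak continuous-greedy derivation specialized to the segment $[0,b\x]$, which is also in the spirit of how the cited source establishes the bound; the only thing to flag is that in the paper's application the relevant fact is exactly this cited statement, so a reader would expect either the citation or your derivation, not both.
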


The third lemma involves another extension of $g$, denoted by $g^*$, as introduced in~\cite{IPCO}. It is defined as $g^*(\x)=\min_{S \subseteq J } \bp{g(S)+\sum_{j \in J} x_j \cdot g_S(j)}$ for any $\x \in [0,1]^n$, where $g_S(j)=g(S \cup\{j\})-g(S)$. \cite{IPCO} shows another key lemma below.

\begin{lemma}[Lemma 4 on page 9 of~\cite{IPCO}]\label{lem:alg2-c}
\begin{align}
g^*(\x) \ge g^{+}(\x), \forall \x \in [0,1]^n  \label{ineq:alg2-c}.
\end{align}
\end{lemma}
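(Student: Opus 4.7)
The plan is to prove $g^*(\x)\ge g^+(\x)$ via linear-programming duality, exploiting the fact that $g^+$ can be written as an LP. Concretely, the concave closure admits the representation
$$g^+(\x)=\max\Bigl\{\sum_{S\subseteq J} y_S\, g(S)\ :\ \sum_{S\ni j} y_S = x_j\ \forall j\in J,\ \sum_S y_S=1,\ y_S\ge 0\Bigr\}.$$
Taking the LP dual, with multiplier $\alpha\in\R$ for the normalization constraint and $\beta_j\in\R$ for each marginal constraint, yields
$$g^+(\x)=\min\Bigl\{\alpha+\sum_{j\in J} x_j\,\beta_j\ :\ \alpha+\sum_{j\in S}\beta_j \ge g(S)\ \forall S\subseteq J\Bigr\}.$$

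The second step is to exhibit, for every subset $S\subseteq J$, a dual-feasible point whose objective value equals the $S$-indexed term in the definition of $g^*(\x)$. Set $\alpha := g(S)$ and $\beta_j := g_S(j)=g(S\cup\{j\})-g(S)$ for each $j\in J$. Then the dual objective evaluated at this point is precisely $g(S)+\sum_{j\in J} x_j\, g_S(j)$; minimizing over $S$ recovers $g^*(\x)$. By weak duality, once feasibility of every such point is verified, we obtain $g^+(\x)\le g^*(\x)$, which is the desired inequality.

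The substance of the argument is therefore verifying the dual constraint $g(S)+\sum_{j\in T}\beta_j\ge g(T)$ for all $T\subseteq J$. I would argue this by a standard two-step estimate that uses both hypotheses on $g$. First, monotonicity gives $g(T)\le g(S\cup T)$, so it suffices to bound $g(S\cup T)-g(S)$. Second, telescoping over an arbitrary enumeration $T\setminus S=\{j_1,\dots,j_m\}$ and applying submodularity to each increment (each marginal with respect to a superset of $S$ is at most the corresponding marginal with respect to $S$) yields $g(S\cup T)-g(S)\le \sum_{j\in T\setminus S} g_S(j)$. Finally, monotonicity again gives $g_S(j)\ge 0$ for every $j$, so the sum may be extended from $T\setminus S$ to all of $T$ without decreasing, completing the feasibility check.

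The main obstacle (such as it is) lies not in any individual step but in setting up the right dual certificate. Once one guesses $\beta_j:=g_S(j)$, the verification reduces to the classical submodular telescoping inequality, and weak duality does the rest. This reproduces the proof of Lemma~4 of~\cite{IPCO}, viewing $g^*$ as a min over the natural ``modular upper envelopes'' of $g$ at each anchor set $S$.
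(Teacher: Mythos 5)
Your proof is correct. Note that the paper itself does not prove this lemma---it is imported verbatim as Lemma~4 of the cited reference---so there is no in-paper argument to compare against; your LP-duality derivation is a faithful, self-contained reconstruction of the standard proof. All the steps check out: the concave closure is exactly the stated LP (the product distribution witnesses feasibility), the dual is correctly formed with a free multiplier $\alpha$ for the normalization and free $\beta_j$ for the equality marginals, and the feasibility of the certificate $\alpha=g(S)$, $\beta_j=g_S(j)$ follows from the chain $g(T)\le g(S\cup T)\le g(S)+\sum_{j\in T\setminus S}g_S(j)\le g(S)+\sum_{j\in T}g_S(j)$, which uses monotonicity, the submodular telescoping bound, and monotonicity again, precisely as you state. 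Weak duality then gives $g^+(\mathbf{x})\le g(S)+\sum_j x_j g_S(j)$ for every anchor $S$, and minimizing over $S$ yields the claim. One cosmetic remark: the duality scaffolding can be bypassed entirely---for any distribution with marginals $\mathbf{x}$ and any fixed $S$, taking expectations directly in the pointwise inequality $g(T)\le g(S)+\sum_{j\in T}g_S(j)$ gives $\mathbb{E}[g(T)]\le g(S)+\sum_j x_j g_S(j)$, which is the same conclusion with one less layer; but this is a matter of presentation, not of substance.
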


Now we start to prove Theorem~\ref{thm:alg2}.
\begin{proof}
Assume that the total number of arrivals $A=\ell$ in \bbsp. Let $\cS_\ell$ denote the random set of distinct balls which arrived before termination in \bbsp. Note that during any round $t$, a ball $j$ will arrive with probability $(y_{ij}^*/T)/(\sum_{e'\in E_i} y_{e'}^*/T)=y_{ij}^*/b$, conditioning on one arrival during $t$.  Right now we have $\ell$ arrivals in total, the arrival distributions are independent over all those $\ell$ rounds. Thus, we claim that the set of distinct arrival balls $\cS_\ell$ should be exactly equal to $\bigcup_{\iota=1}^\ell \pi^{(\iota)}(\y^*/b)$, where $\y^*=\{y_{ij}^*\}$. From Lemma~\ref{lem:alg2-a}, we see that $\E[g_i(\cS_\ell) ]\ge \E\bb{g_i\bp{\bigcup_{\iota=1}^\ell \sig^{(\iota)}(\y^*/b)}}$.

Let $\Sig(\y^*/b)=\bigcup_{\iota=1}^\ell \sig^{(\iota)}(\y^*/b)$, which denotes the $\ell$ \iid copies of  $\sig^{(\iota)}(\y/b)$. Note that for each $j \in J$,
\begin{align}
\Pr[j \in \Sig(\y^*/b)]=1-\big(1-\frac{y_{ij}^*}{b}\big)^\ell \ge  \bp{1-\big(1-\frac{1}{b}\big)^\ell}y_{ij}^* \doteq \kap(\ell) \cdot y_{ij}^*. \label{ineq:alg2-1}
\end{align}
 The last inequality above is due to the fact that $h(x) \doteq  1-\big(1-\frac{x}{b}\big)^\ell$ is an increasing and concave function over $x\in [0,1]$ with $h(0)=0$. Thus, $h(x) \ge h(1) \cdot x$ for all $x \in [0,1]$. Inequality~\eqref{ineq:alg2-1} suggests that $\Sig(\y^*/b)$ includes each element $j \in J$ independently with a marginal probability at least $\kap(\ell) \cdot y_{ij}^*$. Recall that $\sG_i$ is the multilinear relaxation of $g_i$. 
  Therefore, we have 
 \begin{align}
 \E[g_i(\cS_\ell) ] & \ge \E\bb{g_i\bp{\bigcup_{\iota=1}^\ell \sig^{(\iota)}(\y^*/b)}} =\E\bb{g_i\bp{\Sig(\y^*/b) }} \\
 &\ge \sG_i\bp{\kap(\ell) \cdot \y^*} \ge \big(1-\sfe^{-\kap(\ell)}\big) g_i^{+}(\y^*) 
 \ge  \big(1-\sfe^{-\kap(\ell)}\big) \OPT_i.
\label{ineq:alg2-2}
\end{align}
 The first inequality in~\eqref{ineq:alg2-2} is due to the monotonicity of $g_i$. The second inequality in~\eqref{ineq:alg2-2} follows from Lemma~\ref{lem:alg2-b}. Note that the distribution selected by the clairvoyant optimal over $2^J$ is $\{x^*_{i,S}\}$,  which has the marginal distribution equal to $\{y_{ij}^*\}$ according to \nna. Thus, by definition of the concave closure, we  get the third inequality in~\eqref{ineq:alg2-2}. Therefore, we establish the result of Theorem~\ref{thm:alg2} for the general case $2 \le \ell \le b$.
 
Note that for $A=\ell=1$, we have that 
\[
\E[g_i(\cS_1)]=\sum_{j =1}^n \frac{y_{ij}^*}{b} \cdot g_i(\{j\}) \ge  \frac{g_i^*(\y^*)}{b}
\ge  \frac{g_i^{+}(\y^*)}{b} \ge \frac{\OPT_i}{b}.
\]
The first inequality above follows from the definition of $g_i^*$ by setting $S=\emptyset$ (note that $g_i(\emptyset)=0$). The second inequality above is due to Lemma~\ref{lem:alg2-c}, while the last one  follows from the definition of $g^{+}$. 
\end{proof}

\end{document}